\documentclass[12pt]{amsart}

\usepackage{amsfonts,amssymb,mathrsfs}
  \usepackage{amsthm}

\newtheorem{theorem}{Theorem}[section]
\newtheorem{proposition}[theorem]{Proposition}

\newtheorem{lemma}[theorem]{Lemma}
\newtheorem{definition}[theorem]{Definition}
\newtheorem{cor}[theorem]{Corollary}
\newtheorem{conj}[theorem]{Conjecture}

\newcommand\ep{{\varepsilon}}

\newcommand\de{\delta}
\newcommand\zu{[0,1]}
\newcommand\zud{[0,1]^d}
\newcommand\zd{[0,d]}

\newcommand \bfk{{\bf k}}
\newcommand \bfe{{\bf e}}

\newcommand{\R}{\mathbb{R}}
\newcommand{\N}{\mathbb{N}}
\newcommand{\Z}{\mathbb{Z}}

\newcommand{\calm}{\mathcal{M}}

\newcommand{\calr}{\mathcal{R}}
\newcommand{\calc}{\mathcal{C}}

\newcommand{\ho}{H\"older }

\newcommand{\mk}{\medskip}
\newcommand{\sk}{\smallskip}


\newcommand{\DDD}{\Delta}

\newcommand{\caa}{{\mathcal A}}

\newcommand{\cac}{{\mathcal C}}

\newcommand{\cag}{{\mathcal G}}

\newcommand{\ddd}{\delta}

\newcommand{\defeq}{{\buildrel {\rm def}\over =}}

\newcommand{\eee}{\epsilon}

\newcommand{\intt}{\mathrm{int}}

\newcommand{\mmm}{\mu}

\newcommand{\oB}{\overline{B}}

\newcommand{\oo}{\infty}

\newcommand{\rrr}{\rho}
\newcommand{\rrrr}{\varrho}
\newcommand{\sm}{\setminus}
\newcommand{\sse}{\subset}

\newcommand{\ttt}{\theta}
\newcommand{\tttt}{\tau}




%
\begin{document}

\title[Typical Borel measures on $\zu^d$ satisfy a multifractal formalism] {Typical Borel measures on $\zu^d$ satisfy a multifractal formalism}

\author{Zolt\'an Buczolich}

\address{Zolt\'an Buczolich - Department of Analysis, E\"otv\"os Lor\'and
University, P\'azm\'any P\'eter S\'et\'any 1/c, 1117 Budapest, Hungary }

\author{St\'ephane Seuret }

\address{St\'ephane Seuret  - LAMA, CNRS UMR 8050, Universit\'e Paris-Est Cr\'eteil Val-de-Marne,  
  61 avenue du G\'en\'eral de Gaulle, 
94 010 CR\'ETEIL Cedex, France}



\begin{abstract}
In this article, we prove that in the Baire category sense, measures supported by    the unit cube of $  \R^d$ typically satisfy a multifractal formalism. To achieve this, we compute explicitly  the multifractal spectrum of such typical measures $\mu$. This spectrum appears to be linear  with slope $1$,   starting from 0 at exponent 0, ending at dimension $d$ at exponent $d$, and it indeed coincides with the Legendre transform of the $L^q$-spectrum associated with typical measures $\mu$. 
\end{abstract}

\keywords{Borel measures, Hausdorff dimension, Multifractal analysis, Baire categories.}

\maketitle 
\section{Introduction}

 Let $\calm(\zud) $ be the set of probability measures on $\zu^d$
endowed with the weak topology (which makes  $\calm(\zud) $   a compact separable  space). 
 Recall that 
the local regularity of a positive measure $\mu \in \calm(\zud) $  at a given $x_0\in\zu$ is quantified by a {\em local dimension} (or a {\em local \ho exponent}) $h_\mu(x_0)$, defined as
\begin{equation}
\label{defexpmu}
h_\mu(x_0)=\liminf_{r\to 0^+}  \frac{\log \mu (B(x_0,r) )} { \log r},
\end{equation}
where $B(x_0,r)$ denotes the ball with center $x_0$ and radius $r$. 
In geometric measure theory $h_{\mmm}(x_{0})$ is called
the lower local dimension of $\mmm$ at $x_{0}$ and is denoted by
$\underline{\dim}_{\mathrm{loc}}\mmm(x_{0})$.
Then the singularity spectrum of $\mu$ is the map 
$$d_\mu: h\geq 0 \mapsto \dim_{\mathcal{H}}\,  E_\mu(h),$$
$\mbox{where }$
\begin{equation}
\label{defemuh}
  E_\mu(h):=\{x\in\zud: h_\mu(x)=h\}.
  \end{equation}

This spectrum describes the distribution of the singularities of the measure $\mu$, and thus contains crucial information on the geometrical properties of $\mu$. Most often, two forms of spectra are obtained for measures: either a spectrum with the classical concave shape (obtained as Legendre transform of some concave $L^q$-scaling function, for instance in the case of self-similar measures, Mandelbrot cascades and their extensions, see  \cite{Ba1,BM2,BRMICHPEY,OLSEN1,FALC,PESIN,RAND} for historical references, among many references), or a linear increasing spectrum (as in \cite{AB,SEUILLAGE1,JAFFARD1}). 

These two distinct shapes arise in different contexts: On one hand, linear spectra are usually found for measures and functions which are infinite sums of mutually independent contributions, i.e. which are obtained from an {\em additive} procedure.    L\'evy subordinators, which are integrals of  infinite sum of randomly distributed Dirac masses,  and random wavelet series, where the wavelet coefficients are i.i.d. random variables, illustrate this fact.  For such stochastic processes, the greatest \ho exponent coincides with the almost sure exponent, meaning that at Lebesgue almost every point, the sample path of the process enjoys the highest possible local regularity. On the other hand, concave spectra are generally  obtained for measures or functions built using a {\em multiplicative} or {\em hierarchical} scheme. As said above,   Mandelbrot cascades are the archetypes of measures with a multiplicative structure and exhibit in full generality a concave spectrum.  In such constructions, the strong local correlations make it possible the presence of points around which the local exponent is greater than the almost sure exponent. This constitutes a striking difference with additive processes, for which these more regular points do not exist. 

Subsequently, the shape of the spectrum may reflect the structure of the object under consideration, and may reveal some properties of the physics underlying the signal, if any. Hence, it is very natural to investigate the structure of typical measures. Actually we will prove that typical measures tend to exhibit an {\em additive} structure, and the proof we develop will exploit this property.

\mk

Before stating our result, we recall the notion of $L^q$-spectrum for a probability measure $\mu \in \calm(\zud)$.
  If $j$ is an integer greater than $1$,  then we set
  \begin{eqnarray}
  \label{defzj}
\Z_j  =    \{0,1,\cdots, 2^j-1\} ^d .
 \end{eqnarray}
Then, 
 let $\mathcal{G}_j$ be the partition of $[0,1)^d$ into dyadic  boxes: $\mathcal{G}_j$ is the set of all cubes
 \begin{eqnarray*}
   I_{j,\bfk}& \defeq &  \prod_{i=1}^d  \ [k_i2^{-j},  (k_i +1)2^{-j}),
 \end{eqnarray*}
 where $\bfk:= (k_1,k_2,\cdots,k_d)\in \Z_j$.
 
 The $L^q$-spectrum of a measure $ \mu \in \calm(\zud)$ is the mapping defined for any   $q\in\mathbb{R}$ by
\begin{equation}
\label{deftau}
\tau_ \mu (q)=\liminf_{j\to\infty}-\frac{1}{j}\log_2 s_j(q)\ \ \mbox{ where } \ \ s_j(q)= \! \! \! \sum_{Q\in\mathcal{G}_j,\   \mu (Q)\neq 0}  \mu  (Q)^q .
\end{equation}


It is classical  \cite{BRMICHPEY,OLSEN1} that the Legendre transform of $\tau_\mu$ serves as upper bound for the multifractal spectrum $d_\mu$: For every $h\geq 0$, 
\begin{equation}
\label{maj1}
d_\mu(h) \leq  (\tau_\mu)^*(h) := \inf _{q\in \R} (qh -\tau_\mu(q)).
\end{equation}

A lot of work has been achieved  to prove that for specific measures (like self-similar measures, ...,  see all the references above)  the upper bound in \eqref{maj1} turns out to be  an equality. When   \eqref{maj1} is an equality at exponent $h\geq 0$,  the measure is said to {\em satisfy the multifractal formalism} at   $h$. The validity of the multifractal formalism for given measures is a very important issue in   Mathematics and in  Physics, since  when it is known to be satisfied,  it makes it possible to estimate the singularity spectrum of real data through the estimation of the $L^q$-spectrum.  Moreover, it  gives important information   on the  geometrical properties (from the viewpoint of geometric measure theory)  of the measure $\mu$ under consideration. 

\mk

These considerations led us also to find out whether  the validity of the multifractal formalism is  {\em typical} (or {\em generic}). 
Recall that a  property is said to be typical in a complete metric space $E$, when it holds on 
a residual set, i.e. a set  with a complement of first Baire category. A set is of first Baire category if it is the 
union of countably many nowhere dense sets. Most often, including in this paper,  one can verify that the
residual set is dense $G_{\de}$, that is, a countable  intersection of dense open sets in $E$. 

A first result in this direction  was found by
Buczolich and Nagy, who proved in \cite{BucNag} that typical continuous probability measures on $\zu$ have   a linear increasing spectrum with slope 1, and satisfy the formalism. Then Olsen   studied the typical $L^q$-spectra of measures on general compact sets \cite{OLSEN3,OLSEN4}, but did not compute the multifractal spectrum of typical measures.

\mk

In this paper,   we are interested in  the form of the multifractal spectrum of typical Borel measures in $\calm(\zud) $, and we investigate whether the multifractal formalism is typically satisfied for such measures.

\begin{theorem}
\label{th1}
There is a dense $G_\delta$ set $\mathcal{R}$ included in  $\calm(\zud)$ such that for every measure $\mu \in \calr$, we have
\begin{equation}\label{*th11*}
\forall \ h\in \zd, \ \ \ \ d_\mu(h)=h,
\end{equation}
and $E_\mu(h)=\emptyset$ if $h>d$.

In particular, for every $q\in [0,1]$, $\tau_\mu(q)=d(q-1)$, and $\mu$ satisfies the multifractal formalism at every $h\in [0,d]$, i.e. $d_\mu(h) = \tau_\mu^*(h)$.
 \end{theorem}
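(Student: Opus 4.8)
The plan is to decompose the statement into a \emph{deterministic} part (the upper bound $d_\mu(h)\le h$ and one inequality for $\tau_\mu$), and a \emph{generic} part (the emptiness for $h>d$ and the lower bound $d_\mu(h)\ge h$), and then to deduce the $L^q$-spectrum and the formalism almost for free. First I would establish that $d_\mu(h)\le h$ holds for \emph{every} $\mu\in\calm(\zud)$ and every $h\ge0$. Fixing $\ep>0$, the definition \eqref{defexpmu} provides, for each $x\in E_\mu(h)$, arbitrarily small radii $r$ with $\mu(B(x,r))\ge r^{h+\ep}$; applying the $5r$-covering lemma to this fine cover (restricted to radii $<\eta$) and using disjointness of the selected balls gives $\sum_i (5r_i)^{h+2\ep}\le 5^{h+2\ep}\,\eta^{\ep}\,\mu(\zud)\to0$, so $\mathcal H^{h+2\ep}(E_\mu(h))=0$ and hence $\dim_{\mathcal H}E_\mu(h)\le h$. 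In the same deterministic spirit, Hölder's inequality with exponents $1/q$ and $1/(1-q)$ applied to $s_j(q)=\sum_{Q\in\mathcal G_j}\mu(Q)^q$ yields $s_j(q)\le(2^{jd})^{1-q}$ for $q\in[0,1]$, whence $\tau_\mu(q)\ge d(q-1)$ for all $\mu$.

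Next I would treat the emptiness at exponents $h>d$, which also supplies the reverse bound on $\tau_\mu$. For each $N$ set $\mathcal O_N=\{\mu:\exists\,j\ge N,\ \forall Q\in\mathcal G_j,\ \mu(\mathrm{int}\,Q)>2^{-j(d+1/N)}\}$. Each $\mathcal O_N$ is open, since $\mu\mapsto\mu(\mathrm{int}\,Q)$ is lower semicontinuous in the weak topology and the condition is a union over $j$ of finite intersections of strict inequalities; and each $\mathcal O_N$ is dense, because the perturbation $(1-\de)\mu+\de\,\mathrm{Leb}$ is weakly close to $\mu$ and satisfies $\mu'(\mathrm{int}\,Q)\ge\de2^{-jd}\ge2^{-j(d+1/N)}$ once $j$ is large. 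For $\mu\in\bigcap_N\mathcal O_N$ and any $x$, choosing a witnessing scale $j_N\to\infty$ and the cube $Q\in\mathcal G_{j_N}$ containing $x$ (so $Q\sse B(x,\sqrt d\,2^{-j_N})$) gives $\mu(B(x,\sqrt d\,2^{-j_N}))\ge\mu(Q)\ge2^{-j_N(d+1/N)}$, whence $h_\mu(x)\le d$ and $E_\mu(h)=\ess$ for $h>d$. The same lower bounds give $s_{j_N}(q)\ge 2^{j_N d}2^{-j_N(d+1/N)q}$, so $\tau_\mu(q)\le d(q-1)$, and combined with the previous paragraph $\tau_\mu(q)=d(q-1)$ on $[0,1]$ for such $\mu$.

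The core of the proof is the generic lower bound $d_\mu(h)\ge h$ for \emph{all} $h\in\zd$. I would first construct explicit \emph{saturated} building blocks: measures which at a scale $j$ place, for each $h$ on a fine grid of $\zd$, about $2^{jh}$ cubes of $\mathcal G_j$ carrying mass $\approx2^{-jh}$, and which are \emph{self-refining}, reproducing inside each selected cube a rescaled copy of the same pattern down to the next scale (so that all intermediate scales, not just the endpoints, are controlled). A coarse-scale matching argument, preceded by a small Lebesgue perturbation to spread the support, shows these structured measures are dense. I would then encode their presence through a countable family of \emph{relative} dense open conditions, indexed by all finite admissible nested cube-configurations, all $h\in\Q\cap\zd$, and all error parameters: inside each cube of the current configuration, $\mu$ exhibits at some finer scale $\ge N$ the saturated pattern for $h$. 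For $\mu$ in the resulting residual set, iterating the witnessing scales yields, for each $h$ and $\ep>0$, a nested Cantor set $K\sse E_\mu(h)$ carrying a Frostman measure $m$ with $m(B(x,r))\lesssim r^{h-\ep}$, so $\dim_{\mathcal H}E_\mu(h)\ge h-\ep$; interleaving patterns with grid exponents $h_k\to h$ handles arbitrary (not merely rational) $h\in\zd$. The main obstacle lives precisely here: because $h_\mu(x)$ is a $\liminf$ over \emph{all} radii while the weak topology and the Baire scheme only control finitely many scales at a time, one must design the self-refining conditions so that, between two consecutive construction scales, $\mu(B(x,r))$ is pinched between $r^{h+\ep}$ and $r^{h-\ep}$ for every $x\in K$; this forces the local exponent to be \emph{exactly} $h$ on $K$ while the Frostman bound simultaneously delivers the full dimension $h$.

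Finally I would let $\calr$ be the intersection of the residual sets above, so that every $\mu\in\calr$ satisfies $d_\mu(h)=h$ on $\zd$, $E_\mu(h)=\ess$ for $h>d$, and $\tau_\mu(q)=d(q-1)$ on $[0,1]$. For the formalism, $\tau_\mu(1)=0$ gives $\tau_\mu^*(h)\le h\cdot1-\tau_\mu(1)=h$, while \eqref{maj1} gives $h=d_\mu(h)\le\tau_\mu^*(h)$; hence $\tau_\mu^*(h)=h=d_\mu(h)$ for all $h\in\zd$, which is the asserted validity of the multifractal formalism.
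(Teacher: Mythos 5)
Your deterministic steps are sound: the covering argument gives $\dim_{\mathcal{H}} E_\mu(h)\le h$ for every measure (the paper simply quotes Falconer for this), the H\"older estimate gives $\tau_\mu(q)\ge d(q-1)$, and your sets $\mathcal{O}_N$ (open by lower semicontinuity of $\mu\mapsto\mu(\mathrm{int}\,Q)$, dense via a small Lebesgue perturbation) correctly yield $h_\mu(x)\le d$ everywhere, hence $E_\mu(h)=\emptyset$ for $h>d$, and $\tau_\mu(q)\le d(q-1)$ on a residual set; the paper reaches the same conclusions with an atomic perturbation $2^{-J_n/n}\pi_{J_n}$ of the approximants instead of a Lebesgue one, which is a harmless difference. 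The final deduction of the formalism from $d_\mu(h)=h$ and $\tau_\mu(1)=0$ is also fine.

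The gap is in the core lower bound. You write that the self-refining conditions must be designed so that, between consecutive construction scales, $\mu(B(x,r))$ is pinched between $r^{h+\ep}$ and $r^{h-\ep}$ for every $x$ in the Cantor set $K$, thereby ``forcing'' $h_\mu\equiv h$ on $K$. This cannot be arranged on a residual set. A ball of radius $\ep$ in the metric $\rrrr$ constrains $\mu$ only down to spatial scales comparable to $\ep$; below that resolution every weak neighbourhood contains measures that concentrate mass arbitrarily (e.g.\ purely atomic ones), so a dense open set can impose the upper bound $\mu(\overline{B}(x,r))\le r^{h-\ep}$ only at finitely many scales, and a countable intersection only along a sequence of scales --- never for all small $r$. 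Concretely, for typical $\mu$ the points of exponent $<h$ (the sets $\mathcal{A}_{\ttt'}$ with $\ttt'>d/h$ in the paper's notation) form dense $G_\de$ subsets of $\zud$ and will infect any Cantor set of the kind you describe, so $K\not\subset E_\mu(h)$. The paper's resolution is the idea you are missing: one proves only the one-sided, genuinely generic statement $h_\mu(x)\le d/\ttt$ on the limsup set $\mathcal{A}_\ttt$ (a lower bound on $\mu$ at infinitely many scales is exactly what a $\liminf$ exponent tolerates), one builds an \emph{auxiliary} Frostman measure $m_\ttt$ with $m_\ttt(B)\le|B|^{d/\ttt-\psi(|B|)}$ supported in $\mathcal{A}_\ttt$, and one then discards the points of exponent strictly less than $h$ by noting that each $\widetilde {E_\mu}(h-1/n)$ has dimension $\le h-1/n<d/\ttt$ and is therefore $m_\ttt$-null by \eqref{property}; what remains is a subset of $E_\mu(h)$ of positive $m_\ttt$-measure, hence of dimension $\ge d/\ttt=h$. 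Without this (or an equivalent) exhaustion step your argument does not close. A secondary remark: your multi-exponent ``saturated'' building blocks are unnecessary --- a single block, namely equal Dirac masses at all dyadic centres of one generation, produces every exponent $h\in[0,d]$ simultaneously through the varying dyadic approximation rate $\ttt=d/h$.
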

 
 We note that there is a slight difference in notation in \cite{BucNag}
since in \eqref{deftau} there is a negative sign in the definition
of $\tttt_{\mmm}(q)$.
 Since we compute the multifractal spectrum of typical measures $\mu$, using \eqref{maj1}, we recover part of the result of Olsen \cite{OLSEN4}, i.e.  the value of $\tau_\mu(q)$ of $q\in \zu$, when the support of the measure is $\zu^d$. 
 
 \mk
 
 We conjecture that similar properties hold on all compact sets of $\R^d$.
 
\mk\noindent
\begin{conj}
  For any compact set $K \subset \R^d$, there exists a constant $0 \leq D \leq d$ such that typical measures  $\mu$ (in the Baire sense) in $\calm (K)$ satisfy: for every $h\in [0,D]$, $d_\mu(h)=h$, and if $h>D$, $E_\mu(h)=\emptyset$.
\end{conj}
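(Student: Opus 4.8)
The plan is to take $D=\dim_{\mathcal H}K$ and to dispatch two reductions that are independent of $K$. First, for typical $\mu$ one has $\mathrm{supp}\,\mu=K$: for each open ball $B$ with rational centre and radius meeting $K$, the map $\mu\mapsto\mu(B)$ is lower semicontinuous on $\calm(K)$ (portmanteau), so $\{\mu:\mu(B)>0\}$ is open and dense, and intersecting over a countable basis of such $B$ yields a dense $G_\delta$ on which $\mathrm{supp}\,\mu=K$. Second, the bound $d_\mu(h)\le h$ holds for \emph{every} $\mu$: on $E_\mu(h)$ one has $h_\mu(x)=h$, so Billingsley's density lemma gives $\dim_{\mathcal H}E_\mu(h)\le h$. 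Thus only the lower bound $d_\mu(h)\ge h$ on $[0,D]$ and the emptiness $E_\mu(h)=\emptyset$ for $h>D$ need work, and both must be arranged on a single residual set.

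For the emptiness, I would build the residual set from conditions forcing $\mu$ to be well spread at infinitely many scales. Writing $N_j(K)$ for the number of $Q\in\mathcal G_j$ meeting $K$, consider, for each $\ep>0$ and each $J$, the set of $\mu$ for which there is a scale $j\ge J$ with $\mu(Q)\ge 2^{-j(D+\ep)}$ for every $Q\in\mathcal G_j$ with $\mu(Q)>0$. Given any $\nu$, one approximates it weakly by redistributing its mass uniformly over the nonempty boxes of an economical cover of $K$ at a well-chosen scale $j$; this is possible provided $N_j(K)\le 2^{j(D+\ep)}$, i.e. provided $\liminf_j j^{-1}\log_2 N_j(K)=\underline{\dim}_B K\le D$. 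These conditions define a dense $G_\delta$ on which $h_\mu(x)\le D$ for every $x$, hence $E_\mu(h)=\emptyset$ for $h>D$. Note this construction is available only when $D\ge\underline{\dim}_B K$, whereas the automatic bound $d_\mu(D)\le\dim_{\mathcal H}K$ forces $D\le\dim_{\mathcal H}K$; since always $\dim_{\mathcal H}K\le\underline{\dim}_B K$, a single $D$ can meet both requirements exactly when $\underline{\dim}_B K=\dim_{\mathcal H}K$, and then $D$ is this common value.

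For the lower bound $d_\mu(h)\ge h$, I would transplant the construction of Theorem \ref{th1}. Fix $s<\dim_{\mathcal H}K$ and, by Frostman's lemma, a measure $\eta$ on $K$ with $\eta(B(x,r))\le C r^s$; this plays the role that Lebesgue measure on $\zud$ plays in the cube case. For each target $h\le s$ and each typical $\mu$, the scales $j$ at which $\mu$ has been spread out provide, through a ubiquity / mass-transference mechanism, a Cantor-like subset $F_h\subset K$ on which $h_\mu(x)=h$; carrying a Frostman measure on $F_h$ and comparing it with $\eta$ yields $\dim_{\mathcal H}F_h\ge h$, hence $d_\mu(h)\ge h$. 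Letting $s\uparrow\dim_{\mathcal H}K$ and intersecting over a countable dense set of exponents $h$ and over the approximation parameters, all these Cantor sets can be produced on one residual set, which is then intersected with the residual set of the previous paragraph.

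The main obstacle is the absence of self-similarity: in $\zud$ the economical covers and the Frostman measure are exact (dyadic boxes and Lebesgue measure), whereas for a general $K$ the cover geometry at scale $2^{-j}$ and the measure $\eta$ need not align, making the simultaneous control of $h_\mu$ from below on the sets $F_h$ delicate. More fundamentally, the emptiness step pins $D$ to $\underline{\dim}_B K$ while the linearity step pins it to $\dim_{\mathcal H}K$; when $\underline{\dim}_B K>\dim_{\mathcal H}K$ no single $D$ can satisfy both, so the dimension gap is precisely the phenomenon that keeps the statement conjectural. I therefore expect the conjecture in its stated form to be provable for dimension-regular $K$ (for instance self-similar sets, where $\dim_{\mathcal H}K=\dim_B K$), and to require a weaker formulation in general.
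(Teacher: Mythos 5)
You are attempting to prove a statement that the paper itself does not prove: it is stated as a conjecture, and the authors explicitly add that they do not even know whether $D$ should be $\dim_{\mathcal{H}}K$, the lower box dimension of $K$, or some other dimension. So your proposal must stand on its own, and on its own it is a programme with two genuine gaps, one of which you flag and one of which you do not. The one you flag is indeed decisive. The paper's entire lower-bound machinery (Theorem \ref{th2}) rests on the exact homogeneity of the dyadic grid: every Cantor cube of generation $p$ contains, up to a bounded factor, the same number $2^{-d\ttt J_{N_p}}2^{dJ_{N_{p+1}}}$ of candidate balls of generation $p+1$ (see \eqref{*28.5} and \eqref{inegdelta3}), and it is precisely this uniformity that makes the product \eqref{*18*} telescope into the two-sided bound \eqref{*I*} and hence into the gauge estimate of Lemma \ref{lastlemma}. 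A Frostman measure $\eta$ with $\eta(B(x,r))\le Cr^s$ provides only \emph{upper} regularity; it gives no lower bound on how many net points of $K$ at scale $2^{-J_{N_{p+1}}}$ lie inside a given ball of the previous generation, so the counts $\Delta^{I}_{p+1}$ can collapse. Mass-transference and ubiquity arguments of the kind you invoke require two-sided (Ahlfors-type) regularity, which an arbitrary compact $K$ does not possess; "comparing with $\eta$ yields $\dim_{\mathcal{H}}F_h\ge h$" is therefore the missing idea, not a transplantable step, and $d_\mu(h)\ge h$ is not established for any $h>0$. (Your preliminary reductions --- full support is typical, and $d_\mu(h)\le h$ for every $\mu$ as in Proposition \ref{prop0} --- are correct.)

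The gap you do not flag is in your impossibility claim, and it drives your final verdict. You argue that $E_\mu(h)=\emptyset$ for $h>D$ pins $D\ge\underline{\dim}_B K$, because your construction must spread mass over all $N_j(K)$ boxes at a single common scale. That is a constraint of your particular construction, not of the conclusion: different points $x$ may witness $h_\mu(x)\le D$ at \emph{different} scales, so no single-scale pigeonhole applies. Concretely, take $K=\{0\}\cup\{1/n : n\ge 1\}\subset\R$, for which $\dim_{\mathcal{H}}K=0$ while $\underline{\dim}_B K=1/2$. Every measure on this countable set is purely atomic; typical measures have full support, so each isolated point $1/n$ is an atom and has exponent $0$; and a routine Baire argument (mix in a small atom at $0$ and use that $\mu\mapsto\mu(B(0,r))$ is lower semicontinuous for the weak topology) shows that $h_\mu(0)=0$ residually. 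Hence typically $E_\mu(h)=\emptyset$ for all $h>0$, and the conjecture holds there with $D=0=\dim_{\mathcal{H}}K$ even though $\underline{\dim}_B K>\dim_{\mathcal{H}}K$ --- contradicting your claim that no single $D$ can work when the two dimensions disagree. A final, repairable but real, detail: your sets $\{\mu : \mu(Q)\ge 2^{-j(D+\ep)} \mbox{ for all } Q \mbox{ with } \mu(Q)>0\}$ are not open in the weak topology, since masses of half-open boxes are neither upper nor lower semicontinuous there; the paper circumvents exactly this by testing against Lipschitz functions through the metric $\rrrr$, as in the proofs of Proposition \ref{propmajorationd} and of \eqref{*7**}, and your $G_\delta$ would need the same surgery.
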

 
 \mk
 
 Whether $D$ should be  the Hausdorff dimension of $K$ or the lower box dimension of $K$ (or another dimension)  is not obvious for  us at this point.
 
 \mk

In the rest of this work,   pure atomic   measures of the form ($\de_x$ stands for the Dirac measure at $x\in \zu^d$)
\begin{equation}
\label{defnu0}
\nu = \sum_{n\geq 0} \ r_{n}  \cdot \de_{x_n},
\end{equation}
will play a major role. For instance, the separability of $\calm(\zud) $ follows from the fact that measures $\nu$ of the form \eqref{defnu0}, where $(r_{n})_{n\geq 0}$ are positive rational numbers such that  $\sum_{n\geq 0} \ r_{n} =1$,  and where $(x_n)_{n\geq 0}$ are rational points of the cube $\calm(\zud)$, form a  countable dense set in $\calm(\zud)$ for the weak topology. 
Atomic measures $\nu$   have been studied  by many authors \cite{AB,BS1,BS2,SEUILLAGE1,SEUILLAGE2,JAFFARD1,OLSEN1}. In particular, it is shown in \cite{SEUILLAGE1,SEUILLAGE2}  that such measures always exhibit specific multifractal properties.

\mk

 The paper is organized as follows. Section \ref{sec_prel} contains the  precise definitions and some known  results  on dimensions and  multifractal spectra for Borel measures, as well as some recalls on the properties of $\calm(\zud)$.  We also prove the seond part of Theorem \ref{th1}, i.e.  for generic measures, $\tau_\mu(q)=d(q-1)$ for every $q\in \zu$.
 
 In Section \ref{sec_cons0}, we build a  dense $G_\de$ set $\calr$ of measures in  $\calm(\zud)$ such that for every $\mu\in \calr$, {\em for every $x\in \zud$}, $h_\mu(x)\leq d$ and {\em for Lebesgue-almost every $x\in \zud$}, $h_\mu(x)=d$.
 
In Section \ref{sec_cons2} we prove  that  for every $\mu\in\calr$, for every $h\in [0,d)$, $d_\mu(h)=h$.
This implies Theorem \ref{th1}. 
\section{Preliminary results}
\label{sec_prel}

In $\R^{d}$ we will use the metric coming from the supremum norm,
that is, for $x,y\in\R^{d}$, $\rrr(x,y)=\max\{ |x_{i}-y_{i}|:i=1,...,d \}.$

 The open ball centered at $x$ and of radius $r$ is denoted by $B(x,r)$.
The closure of the set $A\subset \R^d$ is denoted by $\overline{A}$, 
moreover
$  |A|$ and  $\mathcal{L}_d(A)$
denote its diameter and $d$-dimensional Lebesgue measure, respectively.

\subsection {Dimensions of sets and measures}

We refer the reader to \cite{FALC} for the standard definition of Hausdorff measures $\mathcal{H}^s(E)$ and Hausdorff dimensions $\dim_{\mathcal{H}}(E) $ of a set $E$.

For a Borel measure $\mu\in \calm(\zud)$, one   defines  the dimension of $\mu$ as
\begin{equation}\label{*dimmu}
 \dim_{\mathcal{H}}( \mu ):=\sup\{ s: h_\mu(x)\geq s \mbox{ for $\mu$-a.e. } x\}. 
\end{equation}
 By Proposition 10.2 of \cite{FALC}
 $$\dim_{\mathcal{H}}(\mu) = \inf\{\dim_{\mathcal{H}} (E): E\subset \zu \mbox{ Borel and } \mu(E)>0\}.$$
 The following property will be particularly relevant: 
 \begin{eqnarray}
 \label{property}
 \begin{array}{c}
 \mbox{if $\dim_{\mathcal{H}}(\mu)\geq h$, then for every Borel set $E\subset \zu$}\\
 \mbox{of dimension strictly less than $h$,  $\mu(E)=0$.}
 \end{array}
 \end{eqnarray}

%


We recall standard results on multifractal spectra of Borel probability measures. 

\begin{proposition}
\label{prop0}
Let $\mu\in\calm(\zud)$ and 
\begin{equation}
\label{embed}
\widetilde {E_\mu}(h) = \{  x\in\zud: h_\mu(x)\leq h\}\supset E_{\mmm}(h).
\end{equation}

For every $h \geq 0$, $d_{\mmm}(h)=\dim_{\mathcal{H}} E_{\mmm}(h)\leq
\dim_{\mathcal{H}} \widetilde {E_\mu}(h) \leq \min(h,d)$.

\end{proposition}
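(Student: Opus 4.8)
The plan is to split the chain of inequalities into three independent facts. The inclusion $E_\mu(h)\subset\widetilde{E_\mu}(h)$ asserted in \eqref{embed} is immediate from the definitions, since $h_\mu(x)=h$ forces $h_\mu(x)\le h$; combined with the monotonicity of Hausdorff dimension under set inclusion this yields $d_\mu(h)=\dim_{\mathcal{H}}E_\mu(h)\le\dim_{\mathcal{H}}\widetilde{E_\mu}(h)$. Likewise $\widetilde{E_\mu}(h)\subset\zud$ gives $\dim_{\mathcal{H}}\widetilde{E_\mu}(h)\le\dim_{\mathcal{H}}\zud=d$ for free. Thus the only substantial point is the bound $\dim_{\mathcal{H}}\widetilde{E_\mu}(h)\le h$, after which $\dim_{\mathcal{H}}\widetilde{E_\mu}(h)\le\min(h,d)$ follows.

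To prove this bound I would first convert the defining $\liminf$ condition into a lower density estimate. Fix any real number $s>h$. If $x\in\widetilde{E_\mu}(h)$, then by \eqref{defexpmu} we have $\liminf_{r\to0^+}\frac{\log\mu(B(x,r))}{\log r}\le h<s$, so there exist arbitrarily small radii $r$ for which $\frac{\log\mu(B(x,r))}{\log r}<s$. For $r<1$ one has $\log r<0$, so multiplying through (which reverses the inequality) gives $\log\mu(B(x,r))>s\log r$, that is $\mu(B(x,r))>r^{s}$. Consequently, at every point $x\in\widetilde{E_\mu}(h)$ there are arbitrarily small radii $r$ with $\mu(B(x,r))>r^{s}$.

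The upper bound on the dimension is then the standard mass-distribution argument. Given $\delta>0$, the balls $B(x,r)$ with $r<\delta$ and $\mu(B(x,r))>r^{s}$ form a fine cover of $\widetilde{E_\mu}(h)$; applying the $5r$-covering lemma I extract a countable pairwise disjoint subfamily $\{B(x_i,r_i)\}_i$ whose fivefold dilations $B(x_i,5r_i)$ still cover $\widetilde{E_\mu}(h)$. Since the $B(x_i,r_i)$ are pairwise disjoint and $\mu$ is a probability measure, $\sum_i r_i^{s}<\sum_i\mu(B(x_i,r_i))\le1$, while the covering sets have diameters bounded by $10\delta$ and satisfy $\sum_i|B(x_i,5r_i)|^{s}\le C_s\sum_i r_i^{s}\le C_s$ for a constant $C_s$ depending only on $s$; letting $\delta\to0$ gives $\mathcal{H}^{s}(\widetilde{E_\mu}(h))\le C_s<\infty$, hence $\dim_{\mathcal{H}}\widetilde{E_\mu}(h)\le s$. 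Letting $s\downarrow h$ finishes the proof. (Equivalently, one may invoke directly the density form of Billingsley's lemma as stated in \cite{FALC}.) The only delicate point, and the step I would write out most carefully, is the passage from the $\liminf$ to genuinely arbitrarily small admissible radii together with the sign bookkeeping coming from $\log r<0$; the covering estimate itself is routine.
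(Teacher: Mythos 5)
Your proof is correct, and it follows the same logical decomposition as the paper: the inclusion $E_\mu(h)\subset\widetilde{E_\mu}(h)$ and the bound by $d$ are immediate, and everything reduces to the single substantive estimate $\dim_{\mathcal{H}}\widetilde{E_\mu}(h)\le h$. The difference is that the paper simply quotes this estimate from Proposition 2.3 of \cite{[FaT]} (the statement that the set of points with lower local dimension at most $h$ has Hausdorff dimension at most $h$), whereas you prove it from scratch. Your argument --- extracting from the $\liminf$ arbitrarily small radii with $\mu(B(x,r))>r^{s}$ for each $s>h$, forming the resulting fine (Vitali) cover, passing to a disjoint subfamily via the $5r$-covering lemma, and using disjointness plus $\mu(\zud)=1$ to get $\sum_i r_i^{s}\le 1$ and hence $\mathcal{H}^{s}(\widetilde{E_\mu}(h))\le 10^{s}<\infty$ --- is exactly the standard proof of that cited lemma, and all the steps (the sign reversal from $\log r<0$, the uniform bound on the radii needed for the covering lemma, the finiteness of $\mathcal{H}^{s}$ forcing $\dim_{\mathcal{H}}\le s$ for every $s>h$) are handled correctly. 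What your version buys is self-containedness at the cost of length; what the paper's version buys is brevity by outsourcing the covering argument to a reference. Either is acceptable here.
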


This follows for instance from  Proposition 2.3 of \cite{[FaT]}, where it is shown that  for 
$$\widetilde {E_\mu}(h) = \{  x\in\zud: h_\mu(x)=
\underline{\dim}_{{\mathrm {loc}}}\mmm(x)
\leq h\}$$ we have $\dim_{\mathcal{H}} \widetilde {E_\mu}\leq h$. The rest follows from the embedding \eqref{embed}.

\mk

From this we deduce in Theorem \ref{th1} that for typical measures, $\tttt_{\mmm}(q)=d(q-1)$
for all $q\in [0,1]$. We prove it quickly for completeness.

\begin{cor}
Assume that  \eqref{*th11*} holds true for a  probability measure $\mu$ on $\zud$. Then $\tttt_{\mmm}(q)=d(q-1)$
for all $q\in [0,1]$. 
\end{cor}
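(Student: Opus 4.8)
The plan is to establish the two inequalities $\tau_\mu(q)\le d(q-1)$ and $\tau_\mu(q)\ge d(q-1)$ for every $q\in\zu$ and to conclude by combining them. The first (an upper bound on $\tau_\mu$) will follow for free from the general Legendre bound \eqref{maj1} together with the standing hypothesis \eqref{*th11*}; the second (a lower bound) is a soft estimate on the sums $s_j(q)$ that uses nothing about $\mu$ beyond the trivial fact that there are at most $2^{jd}$ dyadic cubes at generation $j$.

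For the upper bound, I would fix $q\in\zu$ and apply \eqref{maj1}: for every $h\ge 0$,
\[
d_\mu(h)\ \le\ \tau_\mu^*(h)\ =\ \inf_{p\in\R}\big(ph-\tau_\mu(p)\big)\ \le\ qh-\tau_\mu(q),
\]
the last step holding simply because the infimum is taken over all reals. Restricting to $h\in\zd$ and inserting $d_\mu(h)=h$ from \eqref{*th11*} rearranges to $\tau_\mu(q)\le (q-1)h$. Since $q\le 1$, the right-hand side is nonincreasing in $h$, so the sharpest bound is obtained at $h=d$, which yields $\tau_\mu(q)\le d(q-1)$.

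For the lower bound, I would set $N_j:=\#\{Q\in\mathcal{G}_j:\mu(Q)\neq 0\}\le\#\mathcal{G}_j=2^{jd}$ and note $\sum_{Q\in\mathcal{G}_j}\mu(Q)=\mu([0,1)^d)\le 1$. For $q\in(0,1)$, Hölder's inequality applied to $\sum_Q \mu(Q)^q\cdot 1$ with conjugate exponents $1/q$ and $1/(1-q)$ gives
\[
s_j(q)=\!\!\sum_{Q\in\mathcal{G}_j,\ \mu(Q)\neq 0}\!\!\mu(Q)^q\ \le\ \Big(\sum_{Q}\mu(Q)\Big)^{q}N_j^{\,1-q}\ \le\ N_j^{\,1-q}\ \le\ 2^{jd(1-q)}.
\]
Feeding this into \eqref{deftau} gives $-\frac{1}{j}\log_2 s_j(q)\ge d(q-1)$ for all $j$, hence $\tau_\mu(q)\ge d(q-1)$ after taking the liminf. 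The endpoints are immediate: $s_j(1)\le 1$ forces $\tau_\mu(1)\ge 0$, and $s_j(0)=N_j\le 2^{jd}$ forces $\tau_\mu(0)\ge -d$, matching $d(q-1)$ at $q=1$ and $q=0$.

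Combining the two bounds gives $\tau_\mu(q)=d(q-1)$ on all of $\zu$. I do not anticipate any genuine obstacle: the whole statement is carried by \eqref{maj1} on one side and an elementary cube count on the other. The only points deserving attention are the direction of the optimization over $h\in\zd$ in the upper bound — this is exactly where the endpoint $h=d$, and hence the constant $d$ in $d(q-1)$, enters — and the separate treatment of the degenerate Hölder exponents at $q\in\{0,1\}$.
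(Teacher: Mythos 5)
Your proof is correct. The lower bound $\tau_\mu(q)\ge d(q-1)$ is exactly the paper's argument: H\"older's inequality against the count of at most $2^{jd}$ dyadic cubes of generation $j$ (your refinement using $N_j$ in place of $2^{jd}$ changes nothing), plus the trivial endpoint checks. Where you genuinely diverge is in the upper bound. The paper argues by contradiction: it supposes $\tau_\mu(q')>d(q'-1)$ at some $q'\in(0,1)$, invokes the \emph{concavity} of $\tau_\mu$ to upgrade this to $\tau_\mu(q)>d'(q-1)$ on the whole interval $(q',1)$ for some $d'<d$, and then extracts the contradiction $h<h$ by testing \eqref{maj1} at an exponent $h\in(d',d)$. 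You instead read \eqref{maj1} directly: for fixed $q\in\zu$ and $h\in\zd$, the hypothesis \eqref{*th11*} gives $h=d_\mu(h)\le qh-\tau_\mu(q)$, hence $\tau_\mu(q)\le(q-1)h$, and since $q-1\le 0$ the optimal choice is the endpoint $h=d$ (legitimate, as \eqref{*th11*} is stated for all $h\in\zd$), yielding $\tau_\mu(q)\le d(q-1)$. Your route is shorter and dispenses with concavity entirely; the paper needs concavity only because its contradiction argument tests \eqref{maj1} at exponents strictly below $d$ and must first propagate a pointwise assumption on $\tau_\mu$ to an interval, whereas evaluating the Legendre inequality at $h=d$ lets the single identity $d_\mu(d)=d$ carry the whole upper bound. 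Both arguments rely on the same two external inputs, \eqref{maj1} and the cube count, so nothing is lost in generality.
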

\begin{proof}
Recall that $\tau_\mu$ and $s_j(q)$ were defined in \eqref{deftau}. Since $\cag_{j}$ has $2^{dj}$ many cubes in $\zud$ by using H\"older's inequality for $0<q<1$
$$s_{j}(q)\leq (\sum_{Q\in \cag_{j}} \mu(Q)^{q/q})^{q} (\sum_{Q\in\cag_{j}}1^{1/(1-q)})^{1-q}=1\cdot (2^{jd})^{1-q}.$$
This implies $\tttt_{\mmm}(q)\geq d(q-1)$.  One could also  notice that $\tau_\mu(0)=-d$, $\tau_\mu(1)=0$ and $\tau_\mu$ is a concave map on the interior of its support and hence $\tttt_{\mmm}(q)=d(q-1)$
for all $q\in [0,1]$.

Assume now  that   \eqref{*th11*} holds true for  $\mu$.  
Proceeding towards a contradiction suppose that
there exists $q'\in (0,1)$ such that $\tttt_{\mmm}(q')> d(q'-1)$.
By concavity of $\tttt_{\mmm}(q)$ there exists  $d'<d$
such that $\tttt_{\mmm}(q)>d'(q-1)$ for all $q\in (q',1)$.
Hence for $d'<h<d$ by \eqref{maj1}  and \eqref{*th11*} we have
$$h=d_{\mmm}(h)\leq \inf_{q\in\R}(qh-\tttt_{\mmm}(q))\leq \inf_{q\in (q',1)}(qh-d'(q-1))=
\inf_{q\in (q',1)} (q(h-d')+d')<h,$$
a contradiction.
This concludes the proof.
\end{proof}

\subsection {Separability of $\calm(\zud)$}

Let us denote by  Lip$(\zu^d)$ the set of Lipschitz functions on $\zu^d$ with Lipschitz constant 
not exceeding
$1$. 
Recall that the weak topology on $\calm(\zud)$ is induced by the following metric: if $\mu$ and $\nu$ belong to $\calm(\zud)$, we set
\begin{equation}
\label{defmetric}
\rrrr(\mu,\nu)= \sup \left\{ \left|\int f\, d\mu -\int f\, d \nu  \right|: f\in \mbox{Lip}(\zu^d) \right\}.
\end{equation}

As is mentioned in the introduction,   $\calm(\zud)$ is  a separable set. For our purpose, we specify a countable dense family of atomic measures. Indeed, the set of finite atomic measures of the form 
\begin{equation}
\label{defnu1}
  \sum_{{\bf k}\in \Z_j } \ r_{j,{\bf k}}  \cdot \de_{{\bf k} 2^{-j}},
\end{equation}
where $j\in \N^*=\N\setminus\{ 0 \} $, $\Z_j$ was defined  by \eqref{defzj}  and  $(r_{j,{\bf k}})_{{\bf k}\in  \Z_j}$ are  
(strictly) positive  rational numbers such that  
$$ \sum_{{\bf k}\in \Z_j } \ r_{j,{\bf k}} =1,$$
 forms a dense set in $\calm(\zud)$ for the weak topology.

\section{The construction of $\calr$, our dense $G_\de$ set in $\calm(\zud)$}
\label{sec_cons0}
  
We build a dense $G_\de$ set $\calr$ in $\calm(\zud)$.
In this section we show 
that for every $\mu\in\calr$, for every $x\in \zud$, $h_\mu(x)\leq d$, and for Lebesgue-almost every $x\in\zud$, $h_\mu(x)=d$.

\mk

Let us  enumerate the measures of the form \eqref{defnu1} as a sequence $\{\nu_1$, $\nu_2$, ..., $\nu_n$,..., $\}$.

  Let $n\geq 1$, and consider $\nu_n$. We are going to construct another measure $\mu_n$, close to $\nu_n$ in the weak topology, such that $\mu_n$ has a very typical behavior at a certain scale.

  Let us write the measure $\nu_n $ as  
\begin{equation}
\label{defnun}
\nu_n =  \sum_{{\bf k}\in \Z_ {j_n}  } \ r_{j_n,{\bf k}}  \cdot \de_{{\bf k} 2^{-j_n}},
\end{equation}
  where $j_n$ is the  integer such that $r_{j_n,{\bf k}} >0$ for all $\bfk \in  \Z_ {j_n}$ ($j_n$ is necessarily unique since all  the Dirac masses in  measures in \eqref{defnu1} have   a strictly positive weight).

 Set $\bfe=(1/2,...,1/2)\in \R^{d}$.

For every integer $j\geq 1$, let us introduce the measure $\pi_j$ defined as
\begin{equation}
\label{defpij}
\pi_j = \sum _{\bfk \in  \Z_ {j} } \ 2^{-dj} \, \de_{(\bfk+\bfe) 2^{-j}}.
\end{equation}
This measure $\pi_j$ consists of Dirac masses located at 
centers of 
the dyadic cubes of $\zud$ of generation $j$, and gives the same weight
 to each Dirac mass. 
For every integer $n\geq 1$, let 
\begin{equation}
\label{defkn}
J_n =    2n  ( j_n)^2   ,
\end{equation}
  so that $J_n/n \geq 2j_n$.  Finally, for every $n\geq 1$, we define
\begin{equation}\label{*mnddef*}
\mu_n = 2^{-J_n/n} \cdot  \pi_{J_n} + (1-2^{-J_n/n})  \cdot \nu_n.
\end{equation}
  Obviously, for every $\bfk\in  \Z_ {J_n} $, we have
  \begin{equation}
\label{inegcrucial}
\mu_n(I_{J_n,\bfk}) \geq 2^{-J_n/n} \cdot  \pi_{J_n} (I_{J_n,\bfk})
 \geq 2^{-J_n/n}2^{-d     J_n}  =  |I_{J_n,\bfk}|^{d+1/n}
\end{equation}
where the last equality holds since we use the supremum metric.

  \begin{lemma}
  For every $n\geq 1$, $\rrrr(\mu_n,\nu_n) \leq  2\cdot 2^{-J_n/n }$.
  \end{lemma}
  \begin{proof}
  Recall Definition \eqref{defmetric} of the metric $\rrrr$. Let $f\in$  Lip$(\zu^d)$.  We have 
  \begin{eqnarray*}
  \left|\int f \ d \nu_n- \int f \ d \mu_n \right|  & =  & 2^{-J_n/n} \left|\int f \ d \nu_n -\int f \ d \pi_{J_n} \right| \\
   & = & 2^{-J_n/n}  \rrrr( \nu_n , \pi_{J_n}) \leq 2\cdot 2^{-J_n/n }.  \end{eqnarray*}
       \end{proof}

The density of the sequence $(\nu_n)_{n\geq 1}$ implies the density of $(\mu_n)_{n\geq 1}$, since the distance $\rrrr(\mu_n,\nu_n)$ converges to zero as $n$ tends to infinity.

   \begin{definition}
  We introduce  for every $N\geq 1$ the  sets in $\calm(\zud)$
  \begin{equation}
  \label{defR}
\Omega_N ^{\rrrr}= \bigcup_{n\geq N} \ \   {B}(\mu_n,   2^{-(d+4) (J_n)^{2}})
  \ \mbox{ and } \ 
  \mathcal{R} = \bigcap_{N\geq 1} \ \ \Omega^\rrrr_N,
  \end{equation}
  where the open balls are defined using  the metric $\rrrr$ defined by \eqref{defmetric}.

  Each  set $\Omega^\rrrr_N$ is obviously a dense open set in $\calm(\zud)$, hence $\mathcal{R}$ is a dense $G_\de$ set in $\calm(\zud)$.
  \end{definition}

  %


%


\section{Upper bound for the local \ho exponents of typical measures}
\label{sec_cons2}

We first prove that all exponents of typical measures  $\mu$ are less than $d$, and then, in the last section,  we compute the whole spectrum of $\mu$.

\begin{proposition}
\label{propmajorationd}
For every $\mu \in \calr$, for every $x\in \zud$, $h_\mu(x)\leq d$.
\end{proposition}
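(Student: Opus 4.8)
The plan is to fix $\mu\in\calr$ and $x\in\zud$ and to produce a sequence of radii $r_n\to 0$ along which the quotient $\frac{\log\mu(B(x,r_n))}{\log r_n}$ stays below a quantity tending to $d$; since $h_\mu(x)$ is a $\liminf$, this gives $h_\mu(x)\le d$. First I would exploit membership in $\calr=\bigcap_{N\ge 1}\Omega_N^{\rrrr}$: for every $N$ there is some $n\ge N$ with $\rrrr(\mu,\mu_n)<2^{-(d+4)(J_n)^2}$, so there are arbitrarily large indices $n$ (hence with $J_n=2n(j_n)^2\to\infty$) for which $\mu$ is extremely close to $\mu_n$ in the weak metric. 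Along this subsequence I set $r_n:=2^{-J_n+1}$, so that $r_n/2=2^{-J_n}$.

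The heart of the argument is to transfer the uniform lower mass bound \eqref{inegcrucial} from $\mu_n$ to $\mu$. Recall that the $\pi_{J_n}$-component of $\mu_n$ places a Dirac of mass $2^{-J_n/n}2^{-dJ_n}$ at the center $(\bfk+\bfe)2^{-J_n}$ of every dyadic cube of generation $J_n$. If $\bfk$ is chosen so that $x\in I_{J_n,\bfk}$, that center lies within $\rrr$-distance $\tfrac12 2^{-J_n}=r_n/2$ of $x$, whence $\mu_n\big(B(x,r_n/2)\big)\ge 2^{-J_n/n}2^{-dJ_n}$. To pass this information to $\mu$, I would introduce the $1$-Lipschitz tent function $\phi(y)=\min\!\big(r_n/2,\ \max(0,\ r_n-\rrr(x,y))\big)$, which equals $r_n/2$ on $B(x,r_n/2)$, vanishes off $B(x,r_n)$, and satisfies $0\le\phi\le r_n/2$. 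Since $\phi\in\mathrm{Lip}(\zud)$, the definition of $\rrrr$ yields
\[
\frac{r_n}{2}\,\mu\big(B(x,r_n)\big)\ \ge\ \int\phi\,d\mu\ \ge\ \int\phi\,d\mu_n-\rrrr(\mu,\mu_n)\ \ge\ \frac{r_n}{2}\,\mu_n\big(B(x,r_n/2)\big)-2^{-(d+4)(J_n)^2}.
\]

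Next I would check that the additive error is negligible. Dividing by $r_n/2=2^{-J_n}$ gives $\mu\big(B(x,r_n)\big)\ge 2^{-(d+1/n)J_n}-2^{\,J_n-(d+4)(J_n)^2}$; because the error exponent $J_n-(d+4)(J_n)^2$ falls far below the main exponent $-(d+1/n)J_n$ as soon as $J_n$ is large, we obtain $\mu\big(B(x,r_n)\big)\ge \tfrac12\,2^{-(d+1/n)J_n}$ for all large $n$. Finally, using $\log_2 r_n=-(J_n-1)$ we get $\frac{\log\mu(B(x,r_n))}{\log r_n}\le \frac{1+(d+1/n)J_n}{J_n-1}$, which tends to $d$ as $n\to\infty$, and hence $h_\mu(x)\le d$.

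I expect the main obstacle to be exactly this \emph{transfer step}: the metric $\rrrr$ only controls integrals against Lipschitz functions, so one must design the test function $\phi$ carefully (balancing its height $r_n/2$ against its Lipschitz constant $1$) and, above all, verify that the tiny radius $2^{-(d+4)(J_n)^2}$ built into $\Omega_N^{\rrrr}$ is small enough that the error it introduces does not swamp the exponentially small main term $2^{-(d+1/n)J_n}$. The superlinear (quadratic) exponent $(J_n)^2$ chosen in that radius, set against the merely linear $J_n$ appearing in the mass, is precisely the margin that makes the comparison succeed, and identifying and using that margin is the technical crux.
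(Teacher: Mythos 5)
Your proof is correct and follows essentially the same route as the paper's: both transfer the lower bound $2^{-J_n/n}2^{-dJ_n}$ on the Dirac masses of $\mu_n$ to $\mu$ via a Lipschitz test function of height comparable to $2^{-J_n}$, and both rely on the quadratic exponent $(J_n)^2$ in the radius of the balls defining $\Omega_N^{\rrrr}$ to absorb the transfer error; the paper's test function is supported on the dyadic cube $I_{J_n,\bfk}$ and peaks at its center, whereas yours is a tent centered at $x$, a purely cosmetic difference. (One harmless slip: $\tfrac12 2^{-J_n}=2^{-J_n-1}$, not $r_n/2=2^{-J_n}$, but the center of the cube containing $x$ is still within $r_n/2$ of $x$, so the containment you need holds.)
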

\begin{proof}
Let  $\mu \in \calr$. There is a sequence of positive integers $( {N_p})_{p\geq 1}$ growing to infinity such that for every $p\geq 1$, $\rrrr(\mu,\mu_{N_p} ) \leq 2^{-(d+4)(J_{N_p})^{2} }$.

Suppose that $\bfk\in  \Z_ {J_{N_p}}$.
We introduce an auxiliary function $f_{{p},\bfk}$ defined as follows: First we set $f_{{p},\bfk}((\bfk+\bfe)2^{-J_{N_{p}}})=2^{-J_{N_{p}}-1}$
and $f_{{p},\bfk}(x)=0$ for $x\not\in I_{J_{N_p},\bfk}.$
Then  we use an extension of $f_{{p},\bfk}$
onto $I_{J_{N_p},\bfk}$
such that
$f_{{p},\bfk}\in \mbox{Lip}(\zud)$ and $0\leq f_{{p},\bfk}\leq 2^{-J_{N_{p}}-1}.$ 

First observe that 
 \begin{eqnarray}\int f_{{p},\bfk} \, d \mu & \leq  & 2^{-J_{N_{p}}-1}\mmm(I_{J_{N_p},\bfk}).
 \label{eq00}
   \end{eqnarray}

Moreover, we have 
  \begin{eqnarray}
\label{eq01}
 \int f_{{p},\bfk} \, d \mu_{N_p} & \geq  & 2^{-J_{N_{p}}/N_{p}}\int 
f_{{p},\bfk}   \, d\pi_{J_{N_{p}}}\\  \nonumber
 & \geq & 2^{-J_{N_{p}}/N_{p}} 2^{-dJ_{N_{p}}}\int 
f_{{p},\bfk} \, d \ddd_{(\bfk+\bfe)2^{-N_{p}} } \\
& \geq &  \nonumber
2^{-J_{N_{p}}/N_{p}} 2^{-dJ_{N_{p}}} 2^{-J_{N_{p}}-1}\\
& = &  
|I_{J_{N_p},\bfk}|^{d+1/N_{p}}\cdot 2^{-J_{N_{p}}-1}=
2^{-J_{N_{p}}(d+1+1/N_{p})-1}.  \nonumber
  \end{eqnarray}
We also have
  \begin{eqnarray}
  \label{eq02} \rrrr(\mu,\mu_{N_p}) \leq 2^{-(J_{N_p})^{2}(d+4) } < 2^{-J_{N_{p}}(d+1+1/N_{p})-2}.
    \end{eqnarray}
Combining \eqref{eq00}, \eqref{eq01}  and \eqref{eq02},  we deduce that  
  \begin{eqnarray*} 
  2^{-J_{N_{p}}-1}\mmm(I_{J_{N_p},\bfk})  & >   & 2^{-J_{N_{p}}(d+1+1/N_{p})-1}-  2^{-J_{N_{p}}(d+1+1/N_{p})-2}\\
   &  > & 2^{-J_{N_{p}}(d+1+1/N_{p})-2} , 
    \end{eqnarray*}
 which leads to  
$$\mmm(I_{J_{N_p},\bfk})> 2^{-J_{N_{p}}(d+1/N_{p})-1}=
|I_{J_{N_p},\bfk}|^{d+1/N_{p}+1/J_{N_{p}}}>
|I_{J_{N_p},\bfk}|^{d+2/N_{p}}.$$

For any integer $j\geq 1$, let us denote by $I_j(x)$ the unique dyadic cube of generation $j$ that contains $x$. Recalling the definition of the \ho exponent \eqref{defexpmu} of $\mu$ at  any $x\in \zu^d$, we obviously obtain 
\begin{eqnarray*} h_\mu(x) & = & \liminf_{r\to 0^+}  \frac{\log \mu (B(x,r) )} { \log r}  \leq \liminf_{ p \to +\infty}  \frac{\log \mu (I_{J_{N_p}}(x) )} { \log |I_{J_{N_p}}(x)|}\\
& \leq  &  \liminf_{ p \to +\infty} d+2/N_p =d.\vspace{-2mm}
\end{eqnarray*}
\vspace{-3mm}\end{proof}

\section{The multifractal spectrum of typical measures of $\calr$}
\label{sec_cons3}

Let $\mu\in \mathcal{R}$, where $\mathcal{R}$ was defined by \eqref{defR}.  
 
Hence,  there is a sequence of integers $(N_p)_{p\geq 1}$ such that  $\mu \in \Omega_{N_p}^{\rrrr}$ for every $p\geq 1$. Equivalently, for every $p\geq 1$,  $\rrrr(\mu, \mu_{N_p}) \leq   2^{-(J_{N_p}
)^2(d+4)}$, where $\mu_{N_p}$ is given by \eqref{*mnddef*}.

\mk

We are going to prove that such a measure $\mu$ has necessarily a multifractal spectrum equal to $d_\mu(h)=h$, for every $h\in [0,d)$. Recall that we already have the upper bound $d_\mu(h) \leq h$, hence it suffices to bound from below the Hausdorff dimension of each set $\{x\in \zud:h_\mu(x)=h\}$.

\subsection {Sets $\mathcal{A}_{\ttt,p}$ of points with given approximation rates by the dyadics } \
Let $p\geq 1$, and consider $N_p$ and $\mu_{N_p}$. As usual $\oB(x,r)$ stands for the closed ball of centre $x$ and radius $r$.

\begin{definition}
Let us introduce, for every real number $\ttt \geq 1$, the set of points
$$
\mathcal{A}_{\ttt,p} =
 \bigcup_{\bfk\in  \Z_ {J_{N_p}} }
 \ \ \oB( ({\bfk }+{\bf e})2^{-J_{N_p}} , 2^{- \ttt J_{N_p }}).
$$ 
and then let us define
$$\mathcal{A}_\ttt = \bigcap_{P\geq 1} \ \bigcup_{p\geq P} \ \mathcal{A}_{\ttt,p} = \{x\in\zud: \ x \mbox{ belongs to infinitely many }\mathcal{A}_{\ttt,p}\}.$$
\end{definition}

Essentially, $\mathcal{A}_{\ttt,p}$ consists of the points of $\zud$ which are located close to the Dirac masses  of $\pi_{J_{N_p}}$ (and thus 
close to some of the Dirac masses of $\mu_{N_p}$). The larger $\ttt$, the closer $\mathcal{A}_{\ttt,p}$ to the dyadic points of generation $J_{N_p}$. 
Then $\mathcal{A}_\theta$  contains the points which are infinitely often close to some Dirac masses. 

\begin{lemma} Let $\ep>0$, then there exists an integer $p_\ep$ such that for every $p\geq p_\ep$
and for every $x\in \mathcal{A}_{\ttt,p}$
\begin{equation}\label{*7**}
 \mu  (B(x,2\cdot 2^{- \ttt J_{N_p }} ) )  \geq  2^{-d (1+ 2\ep) J_{N_p}} .
 \end{equation}
 \end{lemma}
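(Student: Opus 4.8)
The plan is to transfer the known concentration of $\mu_{N_p}$ near the dyadic centers to $\mu$ itself, using that $\mu$ is $\rrrr$-close to $\mu_{N_p}$ and testing against a suitable Lipschitz function. Fix $p$ and $x\in\mathcal{A}_{\ttt,p}$, and choose $\bfk\in\Z_{J_{N_p}}$ with $\rrr(x,c)\leq 2^{-\ttt J_{N_p}}$, where $c:=(\bfk+\bfe)2^{-J_{N_p}}$. By \eqref{defpij} and \eqref{*mnddef*}, $c$ is the location of a Dirac mass of $\pi_{J_{N_p}}$, hence of $\mu_{N_p}$, carrying $\mu_{N_p}$-mass at least $2^{-J_{N_p}/N_p}\cdot 2^{-dJ_{N_p}}=2^{-J_{N_p}(d+1/N_p)}$. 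Moreover, by the triangle inequality for $\rrr$, every $y$ with $\rrr(y,c)<2^{-\ttt J_{N_p}}$ satisfies $\rrr(y,x)<2\cdot 2^{-\ttt J_{N_p}}$, so the open ball $B(c,2^{-\ttt J_{N_p}})$ is contained in $B(x,2\cdot2^{-\ttt J_{N_p}})$.

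Next I would introduce the hat function $g(y)=\max\bigl(0,\,2^{-\ttt J_{N_p}}-\rrr(y,c)\bigr)$. It belongs to $\mbox{Lip}(\zud)$ (being $1$-Lipschitz), satisfies $0\leq g\leq 2^{-\ttt J_{N_p}}$ with $g(c)=2^{-\ttt J_{N_p}}$, and is positive only on $B(c,2^{-\ttt J_{N_p}})\subset B(x,2\cdot2^{-\ttt J_{N_p}})$. This yields two estimates. On the one hand, since $g$ is supported in $B(x,2\cdot2^{-\ttt J_{N_p}})$,
$$\int g\,d\mu\leq 2^{-\ttt J_{N_p}}\,\mu\bigl(B(x,2\cdot2^{-\ttt J_{N_p}})\bigr).$$
On the other hand, discarding the nonnegative $\nu_{N_p}$-part of $\mu_{N_p}$ and keeping only the atom at $c$,
$$\int g\,d\mu_{N_p}\geq 2^{-J_{N_p}/N_p}g(c)\,\pi_{J_{N_p}}(\{c\})=2^{-\ttt J_{N_p}}\,2^{-J_{N_p}(d+1/N_p)}.$$
Since $\mu\in\Omega_{N_p}^{\rrrr}$, the definition \eqref{defR} gives $\rrrr(\mu,\mu_{N_p})\leq 2^{-(J_{N_p})^2(d+4)}$, whence $\int g\,d\mu\geq \int g\,d\mu_{N_p}-2^{-(J_{N_p})^2(d+4)}$.

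Combining the three displays, I would obtain
$$\mu\bigl(B(x,2\cdot2^{-\ttt J_{N_p}})\bigr)\geq 2^{-J_{N_p}(d+1/N_p)}-2^{\ttt J_{N_p}}\,2^{-(J_{N_p})^2(d+4)}.$$
The proof then reduces to comparing exponents. For $\ttt$ fixed, the quadratic growth $(J_{N_p})^2(d+4)$ eventually dwarfs $\ttt J_{N_p}+J_{N_p}(d+1/N_p)$, so for $p$ large the subtracted error is at most, say, half of the first term, leaving $\mu\bigl(B(x,2\cdot2^{-\ttt J_{N_p}})\bigr)\geq \frac12\,2^{-J_{N_p}(d+1/N_p)}$. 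Finally, since $J_{N_p}\to\infty$ and $N_p\to\infty$, one chooses $p_\ep$ (depending on $\ep$ and $\ttt$) so large that $1/N_p\leq 2d\ep$ and the factor $\frac12$ is absorbed, giving $\frac12\,2^{-J_{N_p}(d+1/N_p)}\geq 2^{-d(1+2\ep)J_{N_p}}$ and hence \eqref{*7**}.

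I expect the only genuinely delicate point to be this last exponent comparison: one must check that the super-exponentially small weak-distance error $\rrrr(\mu,\mu_{N_p})$ cannot compete with the main term even after multiplication by the factor $2^{\ttt J_{N_p}}$. This is exactly where the quadratic radius $(J_{N_p})^2$ built into the balls defining $\calr$ in \eqref{defR} is used; everything else parallels the test-function computation already carried out in Proposition \ref{propmajorationd}.
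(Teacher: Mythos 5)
Your proof is correct and follows essentially the same route as the paper: both arguments test the measures against a Lipschitz bump of height $2^{-\ttt J_{N_p}}$ supported in $B(x,2\cdot 2^{-\ttt J_{N_p}})$, lower-bound $\int f\,d\mu_{N_p}$ using the atom of $2^{-J_{N_p}/N_p}\pi_{J_{N_p}}$ inside the ball, and absorb the weak-distance error $2^{-(J_{N_p})^2(d+4)}$ plus the factor $\frac12$ by enlarging $\ep$ to $2\ep$. The only (immaterial) differences are that you center an explicit hat function at the atom rather than using a plateau function centered at $x$, and you divide out the factor $2^{-\ttt J_{N_p}}$ at the end instead of carrying it through.
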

\begin{proof}
Obviously, when $x\in \mathcal{A}_{\ttt,p}$,  the closed ball $\oB(x,2^{- \ttt J_{N_p }} )$ contains a Dirac mass of $\mu_{N_p}$ located at some  element $( {\bfk} +  {\bf e})2^{-J_{N_p}}$ (which is the location of a Dirac mass of $\pi_{J_{N_p}}$). Hence,   
$$\mu_{N_p} (\oB(x,2^{- \ttt J_{N_p }} ) ) 
\geq 2^{-J_{N_{p}}/N_{p}}2^{-dJ_{N_{p}}}=  2^{-d J_{N_{p}} (1+1/ (d N_{p}))}.$$
 
 Hence,  if $p$ is large enough to have $\eee>1/dN_{p}$, then
 \begin{equation}
 \label{ineg1}
 \mu_{N_{p}}(\oB(x,2^{- \ttt J_{N_p }} ) ) \geq      2^{-d (1+\ep) J_{N_p}} .
 \end{equation}

As in Proposition \ref{propmajorationd} we use a specific function 
$f_{\ttt,{p}}\in  \mbox{Lip}(\zud) $, that we define as follows: $f_{\ttt,{p}}(z)=2^{-\ttt J_{N_{p}}}$ for $z\in \oB(x,2^{-\ttt J_{N_{p}}})$,
and $f_{\ttt,{p}}(z) = 0 $ if $z\not\in B(x,2\cdot 2^{-\ttt J_{N_{p}}})$.
Otherwise choose an  extension of $f_{\ttt,{p}}$
onto $B(x,2\cdot 2^{-\ttt J_{N_{p}}})\sm \oB(x,2^{-\ttt J_{N_{p}}})$
such that $f_{\ttt,{p}}\in \mbox{Lip}(\zud)$ and
\begin{equation}\label{*7*}
0\leq f_{\ttt,{p}}\leq 2^{-\ttt J_{N_{p}}}.
\end{equation}
Obviously by construction we have
$$2^{-\ttt J_{N_{p}}} \mmm(B(x,2\cdot 2^{-\ttt J_{N_{p}}}))\geq
\int f_{\ttt,{p}} d\mmm.$$
Then by \eqref{ineg1}
$$\int f_{\ttt,{p}}  \, d\mmm_{N_{p}} \geq 2^{-\ttt J_{N_{p}}}2^{-d(1+\eee)J_{N_{p}}}.$$
Recall also that $\rrrr(\mu, \mu_{N_p}) \leq   2^{-(J_{N_p})^2(d+4)}$. If $p$ is large enough to have
$$\frac{1}{2}2^{-\ttt J_{N_{p}}}2^{-d(1+\eee)J_{N_{p}}}> 2^{-(J_{N_{p}})^{2}(d+4)},$$
then 
by 
\eqref{*7*}  
  we obtain (using the same argument as in Proposition \ref{propmajorationd}) that
\begin{eqnarray*}
2^{-\ttt J_{N_{p}}} \mmm(B(x,2\cdot 2^{-\ttt J_{N_{p}}})) & \geq & 
\int f_{\ttt,{p}}  \, d\mmm\\
& \geq &  \int f_{\ttt,{p}}  \ d\mu_{N_p}  - \rrrr(\mu, \mu_{N_p}) \\
& \geq & \frac{1}{2}\cdot 2^{-\ttt J_{N_{p}}}2^{-d(1+\eee)J_{N_{p}}}.
\end{eqnarray*}
This yields
$$
 \mu  (B(x,2\cdot 2^{- \ttt J_{N_p }} ) )
\geq
\frac{1}{2} 2^{-d(1+\eee)J_{N_{p}}}
>
 2^{-d (1+ 2\ep) J_{N_p}} ,
$$
 the last inequality being true when $p$ is large.
 \end{proof}

\subsection {First results on local regularity and on the size of $\mathcal{A}_{\ttt,p}$}

\begin{proposition}
\label{prop1}
If $\ttt>1$ and $x \in \mathcal{A}_\ttt$, then $h_\mu(x)\leq d/\ttt$.
\end{proposition}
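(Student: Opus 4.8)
The plan is to combine the lower bound \eqref{*7**} just established with the defining property of $\mathcal{A}_\ttt$ (namely that its points lie in infinitely many of the sets $\mathcal{A}_{\ttt,p}$), and then to read off the exponent directly from the $\liminf$ defining $h_\mu(x)$, exactly in the spirit of Proposition \ref{propmajorationd}. Since the universal bound $h_\mu(x)\leq d$ is already known, the content here is to exploit the extra closeness to the Dirac masses encoded by $\ttt>1$ in order to get the sharper value $d/\ttt<d$.

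First I would fix $x\in\mathcal{A}_\ttt$ and an arbitrary $\ep>0$. By the very definition of $\mathcal{A}_\ttt$, the index set $\mathcal{P}_x=\{p\geq 1:\ x\in\mathcal{A}_{\ttt,p}\}$ is infinite. The Lemma then provides an integer $p_\ep$ such that for every $p\in\mathcal{P}_x$ with $p\geq p_\ep$ one has $\mu(B(x,2\cdot 2^{-\ttt J_{N_p}}))\geq 2^{-d(1+2\ep)J_{N_p}}$. Writing $r_p=2\cdot 2^{-\ttt J_{N_p}}$, and recalling from \eqref{defkn} that $J_{N_p}\to\infty$ as $p\to\infty$, the radii $r_p$ decrease to $0$ along $\mathcal{P}_x$. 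For such $p$ I would estimate the logarithmic ratio
\begin{equation*}
\frac{\log\mu(B(x,r_p))}{\log r_p}\leq\frac{-d(1+2\ep)J_{N_p}\log 2}{(1-\ttt J_{N_p})\log 2},
\end{equation*}
whose right-hand side converges to $d(1+2\ep)/\ttt$ as $p\to\infty$: the $\log 2$ factors cancel, and the additive constant $+1$ in the denominator (coming from the factor $2$ in $r_p$) is negligible against $\ttt J_{N_p}$.

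Finally, since $h_\mu(x)$ is a $\liminf$ over all $r\to 0^+$, it is bounded above by the $\liminf$ along any particular sequence of radii tending to $0$; taking the sequence $(r_p)_{p\in\mathcal{P}_x}$ yields $h_\mu(x)\leq d(1+2\ep)/\ttt$. As $\ep>0$ is arbitrary, letting $\ep\to 0$ gives $h_\mu(x)\leq d/\ttt$, as claimed. I do not anticipate a genuine obstacle: the computation is essentially identical to that of Proposition \ref{propmajorationd}, and the only points requiring care are that the ball radius appearing in \eqref{*7**} is $2\cdot 2^{-\ttt J_{N_p}}$ rather than $2^{-\ttt J_{N_p}}$ (harmless, as it only perturbs $\log r_p$ by the bounded constant $\log 2$), and that one must first pass to the infinite subsequence $\mathcal{P}_x$ along which $x$ is covered before letting $p\to\infty$.
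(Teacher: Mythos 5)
Your proof is correct and follows essentially the same route as the paper's: pick the infinitely many $p$ with $x\in\mathcal{A}_{\ttt,p}$, apply the lemma's bound \eqref{*7**} along that subsequence, and bound the $\liminf$ defining $h_\mu(x)$ by the limit of the logarithmic ratios before letting $\ep\to 0$. The only difference is that you carry out the ratio computation explicitly where the paper states it more tersely; both handle the harmless factor $2$ in the radius the same way.
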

\begin{proof}
Let  $x \in \mathcal{A}_\ttt$. Then \eqref{*7**} is satisfied for an infinite number of integers $p$. In other words, there is a sequence of real numbers $r_p$ decreasing to zero such that
$$ \mu(B(x,2r_p)) \geq (r_p)^{ \frac{d}{\ttt}  ( 1+ 2\ep)}.$$
This implies that $h_\mu(x) \leq \frac{d}{\ttt} ( 1+ 2\ep)$. Since this holds for any choice of $\ep>0$, the result follows.
\end{proof}

\begin{proposition}
\label{prop2}
For every $\ttt \geq 1$, $\dim_{\mathcal{H}} \mathcal{A}_\ttt \leq  d/\ttt$.
\end{proposition}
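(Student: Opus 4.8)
The plan is to exploit the fact that $\mathcal{A}_\ttt$ is a $\limsup$ set and to estimate its $s$-dimensional Hausdorff measure directly from the natural, economical covers provided by the sets $\mathcal{A}_{\ttt,p}$. Since $\mathcal{A}_\ttt=\bigcap_{P\geq 1}\bigcup_{p\geq P}\mathcal{A}_{\ttt,p}$, for every $P\geq 1$ we have $\mathcal{A}_\ttt\subset\bigcup_{p\geq P}\mathcal{A}_{\ttt,p}$. By construction each $\mathcal{A}_{\ttt,p}$ is the union over $\bfk\in\Z_{J_{N_p}}$ of the $2^{dJ_{N_p}}$ closed balls $\oB((\bfk+\bfe)2^{-J_{N_p}},2^{-\ttt J_{N_p}})$, and because we work with the supremum metric each such ball has diameter $2\cdot 2^{-\ttt J_{N_p}}$. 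Hence the family $\{\oB((\bfk+\bfe)2^{-J_{N_p}},2^{-\ttt J_{N_p}}):p\geq P,\ \bfk\in\Z_{J_{N_p}}\}$ covers $\mathcal{A}_\ttt$, and since $J_{N_p}\to\infty$ as $p\to\infty$, the mesh of this cover tends to $0$ as $P\to\infty$.

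Next, I would fix $s>d/\ttt$ and form the $s$-sum of this cover:
\[
\sum_{p\geq P}\ \sum_{\bfk\in\Z_{J_{N_p}}}\big(2\cdot 2^{-\ttt J_{N_p}}\big)^s
=2^s\sum_{p\geq P}2^{dJ_{N_p}}2^{-s\ttt J_{N_p}}
=2^s\sum_{p\geq P}2^{(d-s\ttt)J_{N_p}}.
\]
Since $s>d/\ttt$ we have $d-s\ttt<0$. To obtain convergence and decay of this sum I would use that $J_{N_p}$ grows at least linearly in $p$: indeed $J_n=2n(j_n)^2\geq 2n$, and we may assume $(N_p)_{p\geq1}$ is strictly increasing so that $N_p\geq p$, whence $J_{N_p}\geq 2N_p\geq 2p$. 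Consequently $2^{(d-s\ttt)J_{N_p}}\leq\big(2^{2(d-s\ttt)}\big)^p$ is the general term of a convergent geometric series of ratio $2^{2(d-s\ttt)}<1$, so the sum above is finite and tends to $0$ as $P\to\infty$.

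It follows that $\mathcal{H}^s(\mathcal{A}_\ttt)=0$ for every $s>d/\ttt$, and therefore $\dim_{\mathcal{H}}\mathcal{A}_\ttt\leq d/\ttt$. This is a routine covering estimate, so there is no genuine obstacle; the only point deserving a little care is the summability of the series, which reduces to knowing that $J_{N_p}$ increases at least linearly in $p$---a fact immediate from the definition $J_n=2n(j_n)^2$ together with $N_p\geq p$.
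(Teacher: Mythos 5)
Your proof is correct and follows essentially the same route as the paper: cover the $\limsup$ set $\mathcal{A}_\ttt$ by the tails $\bigcup_{p\geq P}\mathcal{A}_{\ttt,p}$, estimate the $s$-sum by $C\sum_{p\geq P}2^{(d-s\ttt)J_{N_p}}$, and let $P\to\infty$. Your explicit justification of summability via $J_{N_p}\geq 2p$ is a point the paper leaves implicit, but it is the same argument.
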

\begin{proof}
The upper bound is trivial for $\ttt=1$. Let $\ttt>1$, and let $s>d/\ttt$.

Obviously $\mathcal{A}_\ttt$ is covered by the union $ \bigcup_{p\geq P} \ \mathcal{A}_{\ttt,p}$, for any integer $P\geq 1$. Using this cover 
for large $P$'s
to bound from above the $s$-dimensional pre-measure of $\mathcal{A}_\ttt$, we find for any $\ddd>0$
\begin{eqnarray*}
\mathcal{H}^s_{\ddd }(\mathcal{A}_\ttt) & \leq & \mathcal{H}^s_{\ddd} (\bigcup_{p\geq P} \ \mathcal{A}_{\ttt,p})\\
& \leq &  \sum_{p\geq P} \  \ \  \sum_{\bfk\in   \Z_{J_{N_p}} }   
|B( ({\bfk} +{\bf e})2^{-J_{N_p}}, 2^{- \ttt J_{N_p }} )|^s \\
& \leq & C \sum_{p\geq P}  \  \ \ \sum_{\bfk\in \Z_{J_{N_p}} }  \ 
  2^{-  s\ttt J_{N_p }}   \\
& \leq & C \sum_{p\geq P}  \  2^{ dJ_{N_p}}   \  2^{-  s\ttt J_{N_p }}  ,
\end{eqnarray*}
the last sum being convergent since $s\ttt>d$. This sum converges to zero when $P\to \infty$, as a tail  of a convergent series. Hence $\mathcal{H}^s_{\ddd}(\mathcal{A}_\ttt) =0$ for every $s>d/\ttt$
and $\ddd>0$.
This implies $0=\lim_{\ddd\to0} \mathcal{H}^s_{\ddd}(\mathcal{A}_\ttt)=
\mathcal{H}^s(\mathcal{A}_\ttt)$
 and we deduce that $\dim_{\mathcal{H}} \mathcal{A}_\ttt \leq d/\ttt$.
\end{proof}

\subsection{The lower bound for the dimension of $\mathcal{A}_\ttt$}
\label{sec_cons4}

\begin{theorem}
\label{th2}
For every $\ttt > 1$, there is a measure $m_\ttt$ supported in $\mathcal{A}_\ttt $  satisfying 
\begin{equation}
\label{eq0}
\mbox{ for every Borel set $B\subset \zud$, } \ \ m_\ttt(B) \leq |B|^{d/\ttt -\psi(|B|)},
\end{equation}
where $\psi:\R^+\to\R^+$ is a gauge function, i.e. a positive continuous increasing function such that $\psi(0)=0$.

In particular, by \eqref{defexpmu}, \eqref{*dimmu} and
\eqref{eq0}, $\dim_{\mathcal{H}} m_\ttt \geq d/\ttt$.
\end{theorem}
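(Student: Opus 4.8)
The plan is to realise the lower bound by placing a Frostman-type measure on a homogeneous Cantor subset of $\mathcal{A}_\ttt$. First I would extract from $(N_p)_{p\geq 1}$ a subsequence $(N_{p_\ell})_{\ell\geq 1}$ along which the scales grow very fast, say $\sum_{i=1}^{\ell-1} J_{N_{p_i}} = o(J_{N_{p_\ell}})$ and $J_{N_{p_\ell}} > \ttt J_{N_{p_{\ell-1}}}$; this is possible because $J_{N_p}\to\infty$. Writing $\rrr_\ell = 2^{-\ttt J_{N_{p_\ell}}}$ for the radius of the balls appearing in $\mathcal{A}_{\ttt,p_\ell}$, I would build a decreasing sequence of compact sets $C_\ell = \bigcup_{\oB\in\mathcal{F}_\ell}\oB$, where $\mathcal{F}_\ell$ is a finite family of pairwise disjoint closed balls of radius $\rrr_\ell$ centred at points $(\bfk+\bfe)2^{-J_{N_{p_\ell}}}$, so that $C_\ell\subset\mathcal{A}_{\ttt,p_\ell}$. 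The resulting set $C=\bigcap_\ell C_\ell$ lies in infinitely many $\mathcal{A}_{\ttt,p}$, hence in $\mathcal{A}_\ttt$, and $m_\ttt$ is the natural uniform measure on $C$. Once the bound \eqref{eq0} is proved, the claim $\dim_{\mathcal{H}} m_\ttt\geq d/\ttt$ is immediate from the mass distribution principle \cite{FALC}.

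For the nesting, inside each ball of $\mathcal{F}_{\ell-1}$ (of radius $\rrr_{\ell-1}$) I would keep exactly $M_\ell$ of the balls $\oB((\bfk+\bfe)2^{-J_{N_{p_\ell}}},\rrr_\ell)$ whose centres lie in the concentric ball of radius $\rrr_{\ell-1}/2$; since $\rrr_\ell\ll 2^{-J_{N_{p_\ell}}}$ (because $\ttt>1$) these children are pairwise disjoint and contained in their parent, and one may take $M_\ell$ of order $2^{d(J_{N_{p_\ell}}-\ttt J_{N_{p_{\ell-1}}})}$, the same for every parent. Assigning to each ball of $\mathcal{F}_\ell$ an equal share of its parent's mass gives a generation-$\ell$ ball of mass $m_\ell=\prod_{i=1}^\ell M_i^{-1}$, so that $\log_2(1/m_\ell)=dJ_{N_{p_\ell}}-d(\ttt-1)\sum_{i=1}^{\ell-1}J_{N_{p_i}}$; the fast growth of the scales makes the second term negligible compared with $dJ_{N_{p_\ell}}$.

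It then remains to verify \eqref{eq0}, which I would reduce to a ball $B$ of radius $r$ and split according to the two natural scales of the construction. If $\rrr_\ell\leq r<2^{-J_{N_{p_\ell}}}$ (the gap regime), the separation of the generation-$\ell$ balls forces $B$ to meet only boundedly many of them, so $m_\ttt(B)\leq C\, m_\ell$; the binding case $r\approx\rrr_\ell$ then gives exponent essentially $d/\ttt$. If $2^{-J_{N_{p_\ell}}}\leq r<\rrr_{\ell-1}$, then $B$ meets at most $C(r\,2^{J_{N_{p_\ell}}})^d$ balls of generation $\ell$, whence $m_\ttt(B)\leq C\,r^{d}\,2^{d(\ttt-1)\sum_{i<\ell}J_{N_{p_i}}}$, an exponent close to $d>d/\ttt$. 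In both regimes the deviation from the exponent $d/\ttt$ is controlled by the ratio $\big(\sum_{i<\ell}J_{N_{p_i}}\big)/J_{N_{p_\ell}}\to 0$, and I would define $\psi$ to be a continuous increasing function with $\psi(0)=0$ dominating all these level-dependent corrections. I expect the main obstacle to be precisely this uniform multiscale bookkeeping: organising the estimates at every radius $r$ so that a single gauge $\psi$ simultaneously absorbs the error terms coming from all generations.
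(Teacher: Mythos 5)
Your proposal is correct and follows essentially the same route as the paper: a nested homogeneous Cantor set built from the balls of $\mathcal{A}_{\ttt,p}$ along a rapidly growing subsequence, uniform mass splitting, a two-regime analysis of the radius $r$ (between $\rrr_\ell$ and the separation $2^{-J_{N_{p_\ell}}}$, and between $2^{-J_{N_{p_\ell}}}$ and $\rrr_{\ell-1}$), and a gauge $\psi$ obtained by interpolating the level-dependent error exponents. The only cosmetic difference is that you prune to exactly $M_\ell$ children per parent, whereas the paper keeps every ball contained in the parent and bounds the (slightly variable) count above and below.
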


The proof of Theorem \ref{th2} is decomposed into two lemmas. 
Essentially we apply the classical method of constructing a Cantor set  $ \calc_\ttt$  included in  $\mathcal{A}_\ttt$ and simultaneously the measure $m_\ttt$ supported by $\mathcal{A}_\ttt$ which satisfies \eqref{eq0}.

\sk

We select and fix a sufficiently rapidly growing subsequence of $(N_p)_{p\geq 1}$, that,
for ease of notation, we still denote by $(N_p)_{p \geq 1}$, such that   $N_{1}>100,$ and for  every $p\geq 1$
\begin{eqnarray}
&& J_{N_{p+1}}> \max(100\cdot \ttt J_{N_{p}}, p^{2}),\label{*142*}\\ 
\mbox{ and } &&  2^{d J_{N_{p+1}}(1-1/(p+1))}    \leq 2^{-d\ttt J_{N_{p}}}2^{dJ_{N_{p+1}}-2}. 
\label{*25b}
\label{test}
\end{eqnarray}
Since $J_{N_{p}}\to\oo$ as $N_{p}\to\oo$ it is clear
that \eqref{*142*} and \eqref{*25b} can be satisfied
by choosing a suitable subsequence.
 We will need these assumptions to ensure that the next Cantor set
 generation used during the definition
of $\cac_{\ttt}$ is much finer than the previous one and hence $\cac_{\ttt}$ is nonempty and we can use estimate \eqref{inegdelta3} later.

Moreover, we also suppose that
$N_{p}$ is increasing so rapidly 
that for $p\geq 3$
\begin{equation}\label{*p3eq}
2^{-dJ_{N_{p}}(1+1/p)}\leq 
\Big(\prod_{k=1}^{p} 2^{d J_{N_k} }   \Big) ^{  \!-1}   
\mbox{ and }
\end{equation}
$$   \Big (\prod_{k= 1}^{p}  2^{d J_{N_k} ( 1 -1/k) } \Big) ^{  \!-1} 
\leq 2^{-dJ_{N_{p}}(1-2/p)}.$$

Then, the construction of $ \calc_\ttt$ is achieved as follows:
\begin{itemize}
\item
 
The first generation of  cubes of $ \calc_\ttt$  consists of
all  the  balls of the form  
$\oB( ({\bfk} + {\bf e}) 2^{-{J_{N_1}}}, 2^{-\ttt{J_{N_1} }-1})\sse\zud$, where $\bfk \in \Z_{J_{N_1}} $.  We call $\mathcal{F}_{1}$ the set of  such  cubes, and we set $\Delta_1= \# \mathcal{F}_1$.
 Then,  a measure $m_1$ is defined as follows: for every cube  $I \in  \mathcal{F}_1$, we set 
$$m_1(I ) =\frac{1} { \Delta_1}.$$
The probability measure $m_1$ gives the same weight to each dyadic cube of first generation. 
 The measure $m_1$ can be extended to a Borel probability measure on the algebra generated by $\mathcal{F}_1$,  i.e. on  $ \sigma ( I: I\in  \mathcal{F}_1)$.

\sk

 \item
Assume that we have constructed the first $p\geq 1$ generations of cubes $\mathcal{F}_1$, $\mathcal{F}_2$, ..., $\mathcal{F}_p$ and a measure $m_p$ on the algebra 
$ \sigma \big( L: L\in  \mathcal{F}_p\big).$
 Then we choose the cubes of generation $p+1$ as those closed balls
of the form $\oB( ({\bfk} + {\bf e}) 2^{-{J_{N_{p+1}}}},2^{-\ttt J_{N_{p+1}}-1})$  which are entirely included in one (and, necessarily, in  only one)  cube $I  $ of generation $p$
and $\bfk \in \Z_{J_{N_{p+1}}} $. 
We call $\mathcal{F}_{p+1}$ the set  consisting of them. 
 We also set for every $I'\in \mathcal{F}_p$,  
 $$\Delta^{I'}_{p+1}= \# \{ I\in\mathcal{F}_{p+1} :
 I\subset I'\}.$$ 
Then we define the measure $m_{p+1}$: For every cube $I \in \mathcal{F}_{p+1}$,    we set
$$m_{p+1}(I) =  m_{p}(I')\frac{1} { \Delta^{I'}_{p+1} },$$
where $I'$ is the unique cube of generation $p$ in  $\mathcal{F}_p$ such that $I\subset I'$.\\
 
The probability measure $m_{p+1}$ can be extended to a Borel probability measure on the algebra   $ \sigma ( L: L\in  \mathcal{F}_{p+1})$ generated by $\mathcal{F}_{p+1}$. 
\end{itemize}

Finally, we set
$$ \calc_\ttt=  \bigcap_{p\geq 1} \ \ \   \bigcup_{I \in \mathcal{F}_p }  \ \  \ \ I . 
$$
By the Kolmogorov extension theorem,  $(m_p)_{p\geq 1}$ converges weakly to a Borel probability measure $m_\ttt$  supported on  $ \calc_\ttt$ and such that for every $p\geq 1$,  for every $I\in \mathcal{F}_p$, 
 $m_\ttt(I)=m_p(I)$.

\subsection{Hausdorff dimension of $ \calc_\ttt$ and $m_\ttt$}
\label{seccons2}

We first prove  that  $m_\ttt $ has  uniform behavior on the cubes belonging to  $\bigcup_{p}\mathcal{F}_p$.
 \begin{lemma}
 \label {avantdernierlemme} 
When $p$ is sufficiently large, for every cube $I\in \mathcal{F}_p$, 
\begin{equation}\label{*I*}
2^{-d J_{N_p}(1+1/p)}  \leq m_\ttt (I) \leq   2^{-dJ_{N_p}(1-2/p)}  
\end{equation}
and 
  \begin{equation}
\label{ident5}
 |I|^{\frac{ d}{ \ttt} +\frac{1}{|\log|I||} }    \leq m_\ttt (I) \leq   
 |I|^{\frac{ d}{ \ttt}- \frac{1}{|\log|I||}  } .
 \end{equation}
\end{lemma}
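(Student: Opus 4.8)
\emph{Proof plan.} The starting point is the explicit product formula for $m_\ttt$ on the cubes of the construction. Fix $p$ large and $I\in\mathcal{F}_p$, and let $I=I_p\sse I_{p-1}\sse\cdots\sse I_1$ be the unique chain of ancestors, with $I_k\in\mathcal{F}_k$. Since $m_\ttt(I)=m_p(I)$ and $m_{k}(I_k)=m_{k-1}(I_{k-1})/\Delta^{I_{k-1}}_{k}$, unrolling the recursion gives
$$
m_\ttt(I)=\frac{1}{\Delta_1}\prod_{k=2}^{p}\frac{1}{\Delta^{I_{k-1}}_{k}},
$$
so both inequalities in \eqref{*I*} reduce to a sharp two-sided control of $\prod_{k\le p}\Delta_k$, where $\Delta_1=\#\mathcal{F}_1=2^{dJ_{N_1}}$ and, for $k\ge 2$, $\Delta^{I_{k-1}}_{k}$ counts the balls $\oB((\bfk+\bfe)2^{-J_{N_k}},2^{-\ttt J_{N_k}-1})$ lying inside the parent cube $I_{k-1}$.

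Next I would estimate each $\Delta^{I_{k-1}}_{k}$ from both sides. The admissible centres lie on the $2^{-J_{N_k}}$-spaced grid of dyadic centres of generation $J_{N_k}$, so the trivial bound $\Delta^{I_{k-1}}_{k}\le 2^{dJ_{N_k}}$ holds (there are at most that many centres in $\zud$). For the lower bound, $I_{k-1}$ is a cube of side $2^{-\ttt J_{N_{k-1}}}$ and the balls are far smaller than the grid spacing, so after discarding the centres within $2^{-\ttt J_{N_k}-1}$ of $\partial I_{k-1}$ — a loss of at most two centres per coordinate — one gets
$$
\Delta^{I_{k-1}}_{k}\ \ge\ \big(2^{J_{N_k}-\ttt J_{N_{k-1}}}-2\big)^{d}\ \ge\ 2^{d(J_{N_k}-\ttt J_{N_{k-1}})-2}\ \ge\ 2^{dJ_{N_k}(1-1/k)},
$$
the middle inequality holding once $J_{N_k}-\ttt J_{N_{k-1}}$ is large, and the last one being exactly the rearrangement of the calibration hypothesis \eqref{*25b} (with $k=p+1$).

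Combining these per-generation bounds and telescoping yields \eqref{*I*}. On one hand $\prod_{k=1}^p\Delta_k\le 2^{d\sum_{k=1}^p J_{N_k}}\le 2^{dJ_{N_p}(1+1/p)}$ by the first line of \eqref{*p3eq}, whence $m_\ttt(I)\ge 2^{-dJ_{N_p}(1+1/p)}$; on the other hand $\prod_{k=1}^p\Delta_k\ge \prod_{k=1}^p 2^{dJ_{N_k}(1-1/k)}\ge 2^{dJ_{N_p}(1-2/p)}$ by the second line of \eqref{*p3eq} (the $k=1$ factor being $1$, so that the crude bound $\Delta_1\ge 1$ suffices there), whence $m_\ttt(I)\le 2^{-dJ_{N_p}(1-2/p)}$. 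Then \eqref{ident5} is simply \eqref{*I*} rewritten in terms of the diameter: a ball of radius $2^{-\ttt J_{N_p}-1}$ has sup-diameter $|I|=2^{-\ttt J_{N_p}}$, so $J_{N_p}=|\log_2|I||/\ttt$ and $2^{-dJ_{N_p}c}=|I|^{cd/\ttt}$; substituting into \eqref{*I*} gives $m_\ttt(I)=|I|^{\,d/\ttt+\eta}$ with $|\eta|\le 2d/(\ttt p)$, and since $J_{N_p}\ge p^2$ by \eqref{*142*} this correction tends to $0$ as $|I|\to 0$, which is precisely the two-sided gauge bound recorded in \eqref{ident5}.

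I expect the only delicate point to be the lower estimate for $\Delta^{I_{k-1}}_{k}$: one must verify that the counted balls are genuinely contained in the parent cube, that the resulting boundary loss is absorbed by the ``$-2$'' built into \eqref{*25b}, and, more generally, that the three growth hypotheses \eqref{*142*}, \eqref{*25b} and \eqref{*p3eq} are mutually compatible, so that the multiplicative errors accumulated over the $p$ generations telescope exactly into the clean $1/p$ and $2/p$ corrections of \eqref{*I*}. Everything else is bookkeeping around the product formula.
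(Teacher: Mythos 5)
Your argument for \eqref{*I*} is correct and is essentially the paper's own proof: the product formula $m_\ttt(I)=\bigl(\prod_{k=1}^{p}\Delta^{I_{k-1}}_{k}\bigr)^{-1}$, the trivial upper bound $\Delta^{I_{k-1}}_{k}\le 2^{dJ_{N_k}}$, the grid count with boundary loss combined with \eqref{*25b} for the lower bound, and the telescoping via \eqref{*p3eq}. The reformulation $|I|^{\frac{d}{\ttt}(1+1/p)}\le m_\ttt(I)\le |I|^{\frac{d}{\ttt}(1-2/p)}$ (the paper's \eqref{*19**}) is also obtained correctly.

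The final step, however, does not deliver \eqref{ident5} as written. You obtain $m_\ttt(I)=|I|^{d/\ttt+\eta}$ with $|\eta|\le 2d/(\ttt p)$, whereas \eqref{ident5} demands $|\eta|\le 1/|\log|I||$; since $|\log|I||=\ttt J_{N_p}\log 2$ and $J_{N_p}\ge (p-1)^2$ by \eqref{*142*}, one has $1/|\log|I||\ll 2d/(\ttt p)$, so the inequality you need points the wrong way --- ``the correction tends to $0$'' gives \emph{some} gauge bound, but not that particular gauge. Worse, the upper half of \eqref{ident5} is actually false: $|I|^{\frac{d}{\ttt}-\frac{1}{|\log|I||}}=e\,|I|^{d/\ttt}$ is only a fixed constant multiple of $|I|^{d/\ttt}$, while the same grid count read from above, $\Delta^{I_{k-1}}_{k}\le 2\cdot 2^{d(J_{N_k}-\ttt J_{N_{k-1}})}$, forces $m_\ttt(I)\ge 2^{-(p-1)}\,2^{d(\ttt-1)\sum_{k=1}^{p-1}J_{N_k}}\,|I|^{d/\ttt}$, whose ratio to $|I|^{d/\ttt}$ blows up because $\ttt>1$. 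To be fair, the paper makes the identical leap (it observes $p=o(|\log|I||)$ and concludes, which is exactly backwards), and nothing downstream is affected, since Lemma \ref{lastlemma} and Theorem \ref{th2} use only \eqref{*I*} and \eqref{*19**}, i.e.\ an exponent $d/\ttt+o(1)$ with an unspecified gauge. In your write-up you should either stop at \eqref{*19**}, or state \eqref{ident5} with a gauge that dominates $2d/(\ttt p)$ along the sequence $|I|=2^{-\ttt J_{N_p}}$ (which is exactly the interpolation performed later in Lemma \ref{lastlemma}).
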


 \noindent {\bf Proof:} Obviously,
\begin{equation}\label{inegdelta1}
  \Delta_1  
  = 2^{d J_{N_1}}.
\end{equation}
Using $J_{N_{1}}>N_{1}>100$
\begin{equation}
\label{inegdelta1bis}
  2^{ d J_{N_1}(1-1/2)}    \leq 
  \frac{1}{2}\cdot  2^{d J_{N_1}}  \leq \Delta_1 .
  \end{equation}

Let $I$ be a cube of generation $p\geq 1 $   in the Cantor set $\mathcal{C}_\theta$.
The subcubes  of $I $ are of the form $\oB( ({\bfk} +{\bf e})2^{-J_{N_{p+1}}},2^{-\ttt  J_{N_{p+1}}-1})$ and  are regularly distributed.
Next, when calculating the number of these subcubes in $I$
on the right-handside of 
the inequality in
\eqref{*28.5} a factor $1/2$ will take care
of the fact that for a few   $(\bfk+{\bf e}) 2^{-J_{N_{p+1}}}$
on the frontier of $I$, we do not have
$\oB ( ( \bfk + {\bf e})2^{-J_{N_{p+1}}}, 2^{-\ttt J_{N_{p+1}}-1})\sse I $.
 we deduce that
\begin{eqnarray}\label{*28.5} 
  \Delta^{I}_{p+1}  & \geq &  \frac{1}{2}     ( |I|) ^d2^{ d J_{N_{p+1}} }    =  
  \frac{1}{2}
  2^{-d\ttt J_{N_{p}}}\cdot 2^{dJ_{N_{p+1}}} \\
 \nonumber  \mbox{and } \ \  \Delta^{I}_{p+1} &=& 
  \# \{ I'\in \mathcal{F}_{p+1}: I'\subset I \}  \, \leq   \,   2 (    |I|)^d  \, 2^{ d J_{N_{p+1}} } .
\end{eqnarray}
 
 Using \eqref{*25b}, 
 and the fact that $ 2( |I|)^d \leq 1$,  we  obtain
 \begin{equation}
\label{inegdelta3}
2^{d J_{N_{p+1}}(1-1/(p+1))}   \leq  \Delta^{I}_{p+1}   \leq 2^{ d J_{N_{p+1}} } .
\end{equation}

\sk

Recalling that $I\in\mathcal{F}_p$,   for $k\leq p$ denote by
$I_k$ the unique cube in $\mathcal{F}_k$ containing $I$.
Set $I_{0}=\zud$ and $\DDD^{I_{0}}=\DDD_{1}$.
We obtain
\begin{equation}\label{*18*}
  \big (\prod_{k=1}^{p}  \Delta^{I_{k-1}}_{k} \big) ^{-1}   =m_\ttt (I).
 \end{equation}

The key property is that in   \eqref{inegdelta3}, the bounds are uniform in $I\in \mathcal{F}_p$. Hence,  
$$  \Big(\prod_{k=1}^{p} 2^{d J_{N_k} }   \Big) ^{  \!-1}    \leq m_\ttt (I) \leq  
    \Big (\prod_{k= 1}^{p}  2^{d J_{N_k} ( 1 -1/k) } \Big) ^{  \!-1}    .$$
By \eqref{*p3eq} 
we have \eqref{*I*} when $p\geq 3$.

This means that, the measure $m_\ttt $ is almost uniformly distributed on the cubes of 
the same generation. Since these cubes $I\in\mathcal{F}_p$ have 
 the same diameter which is  $ 2^{-\ttt J_{N_p}}$, we  obtain 
  for large $p$'s that
  \begin{equation}\label{*19**}
 |I|^{\frac{ d}{ \ttt}(1+1/p)}    \leq m_\ttt (I) \leq    |I|^{\frac{ d}{ \ttt}(1- 2/p)}.
 \end{equation}

 Finally, we remark that  by \eqref{*142*},
 $p =o(|\log |I||)$  when $I\in\mathcal{F}_p$ is arbitrary,  
 hence \eqref{*19**} yields \eqref{ident5}. \hfill $\Box$
 
\mk
%

Now we extend \eqref{ident5} and Lemma \ref{avantdernierlemme} to all Borel subsets of $\zu$.

\begin{lemma}
\label{lastlemma}
There is a continuous  increasing  mapping $\psi:\R^+\to \R^+$, satisfying $\psi(0)=0$, and there is  $\eta >0$, such that for any Borel set $B\subset \zu$ with $|B|<\eta$ we
have  
 \begin{equation}
\label{ident6}
  m_\ttt (B) \leq    |B|^{\frac{d}{\ttt}- \psi(|B|)} .
 \end{equation}
 \end{lemma}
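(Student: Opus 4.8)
The plan is to upgrade the estimate \eqref{ident5}, which is valid only for cubes $I\in\mathcal{F}_p$, to an estimate valid for arbitrary Borel sets $B$ of small diameter. The standard device is that to bound $m_\ttt(B)$ it suffices to bound $m_\ttt(I)$ summed over the Cantor cubes $I$ at a suitable generation that meet $B$, because $m_\ttt$ is supported on $\calc_\ttt\subset\bigcup_{I\in\mathcal{F}_p}I$. So the first step is: given $B$ with $|B|<\eta$, choose the generation $p$ at which the diameter $2^{-\ttt J_{N_p}}$ of the cubes in $\mathcal{F}_p$ is comparable to $|B|$. Concretely, since the diameters $2^{-\ttt J_{N_p}}$ decrease to $0$, I would let $p=p(B)$ be the largest index with $2^{-\ttt J_{N_p}}\geq |B|$, so that $2^{-\ttt J_{N_{p+1}}}<|B|\leq 2^{-\ttt J_{N_p}}$.

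The second step is to count how many cubes of $\mathcal{F}_{p+1}$ can intersect $B$. Since $|B|\leq 2^{-\ttt J_{N_p}}$ (roughly the spacing of generation $p+1$ packing inside one generation-$p$ cube is governed by $2^{-J_{N_{p+1}}}$, and the cubes of $\mathcal{F}_{p+1}$ are $2^{-\ttt J_{N_{p+1}}-1}$-balls centered at the dyadic points of generation $J_{N_{p+1}}$, which are $2^{-J_{N_{p+1}}}$-separated), a set of diameter $|B|$ can meet at most on the order of $(|B|\,2^{J_{N_{p+1}}}+1)^d$ of these centers. Here the crucial gap condition \eqref{*142*}, giving $J_{N_{p+1}}\gg \ttt J_{N_p}$, ensures that $|B|\geq 2^{-\ttt J_{N_{p+1}}}$ is enormous compared with the $2^{-J_{N_{p+1}}}$ separation, so the $+1$ is negligible and the count is at most $C\,|B|^d\,2^{dJ_{N_{p+1}}}$. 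Each such cube $I$ has $m_\ttt(I)\leq 2^{-dJ_{N_{p+1}}(1-2/(p+1))}$ by \eqref{*I*}. Multiplying the number of cubes by the maximal mass of each gives
\[
m_\ttt(B)\leq C\,|B|^d\,2^{dJ_{N_{p+1}}}\cdot 2^{-dJ_{N_{p+1}}(1-2/(p+1))}=C\,|B|^d\,2^{2dJ_{N_{p+1}}/(p+1)}.
\]

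The third step is to convert the factor $2^{2dJ_{N_{p+1}}/(p+1)}$ into a small power of $|B|$, so as to produce the gauge $\psi$. Using $|B|>2^{-\ttt J_{N_{p+1}}}$, i.e. $J_{N_{p+1}}<|\log|B||/(\ttt\log 2)$, one rewrites $2^{2dJ_{N_{p+1}}/(p+1)}=|B|^{-\rho}$ with $\rho$ of order $1/(p+1)$, and since $p=p(B)\to\infty$ as $|B|\to 0$, this exponent tends to $0$. Also the competing factor $|B|^d$ should be compared against the target exponent $d/\ttt$; here one must check that $d\geq d/\ttt$ (true as $\ttt>1$) so that $|B|^d\leq|B|^{d/\ttt}$ for small $|B|$, and the leftover room $|B|^{d-d/\ttt}$ more than absorbs both the constant $C$ and the slowly varying correction. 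Collecting terms yields $m_\ttt(B)\leq|B|^{d/\ttt-\psi(|B|)}$ for an explicit continuous increasing $\psi$ with $\psi(0)=0$, which is \eqref{ident6}. The main obstacle I anticipate is the bookkeeping in the covering count of the second step: one must handle both the case where $B$ meets several generation-$(p+1)$ cubes lying in the same generation-$p$ cube and the borderline case where $|B|$ straddles the gap between generations, and verify that the crude bound $C|B|^d 2^{dJ_{N_{p+1}}}$ is legitimate using precisely the scale separation in \eqref{*142*}; getting the monotone gauge $\psi$ to be genuinely increasing and continuous (rather than merely tending to $0$ along the discrete sequence of scales) requires interpolating the per-generation exponents across all intermediate diameters.
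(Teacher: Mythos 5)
Your overall plan---cover $B$ by the Cantor cubes of the generation whose scale matches $|B|$, count them, multiply by a per-cube mass bound, and interpolate the resulting exponents into a gauge $\psi$---is the paper's strategy, and your final interpolation step is exactly what the paper does. But two of your intermediate claims are false as stated. First, the count. Since $\ttt>1$, the diameter $2^{-\ttt J_{N_{p+1}}}$ of a cube of $\mathcal{F}_{p+1}$ is far \emph{smaller} than the spacing $2^{-J_{N_{p+1}}}$ of their centers, not larger, so your assertion that $|B|\geq 2^{-\ttt J_{N_{p+1}}}$ makes the ``$+1$'' negligible is backwards. On the sub-range $2^{-\ttt J_{N_{p+1}}}<|B|<2^{-J_{N_{p+1}}}$ of your interval, $|B|^d2^{dJ_{N_{p+1}}}$ is astronomically smaller than $1$ while $B$ can still meet a cube of $\mathcal{F}_{p+1}$; the count bound $C|B|^d2^{dJ_{N_{p+1}}}$ fails there. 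That sub-case must be handled separately (it is the paper's first bullet, up to an index shift): $B$ meets at most $2^d$ cubes $I$ and one concludes directly from $m_\ttt(I)\leq 2^{-dJ_{N_{p+1}}(1-2/(p+1))}$ combined with $|B|^{d/\ttt}\geq 2^{-dJ_{N_{p+1}}}$.

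Second, and more seriously: in the regime where the count is legitimate, multiplying it by the \emph{uniform} per-cube bound of \eqref{*I*} leaves the factor $2^{2dJ_{N_{p+1}}/(p+1)}$, which cannot be absorbed into $|B|^{-\psi(|B|)}$. Your conversion rests on ``$J_{N_{p+1}}<|\log|B||/(\ttt\log 2)$'', but $|B|>2^{-\ttt J_{N_{p+1}}}$ gives the \emph{reverse} inequality $J_{N_{p+1}}>|\log|B||/(\ttt\log 2)$. Indeed $|B|$ can be as large as $2^{-\ttt J_{N_p}}$ while the construction only bounds $J_{N_{p+1}}$ from \emph{below} in terms of $J_{N_p}$ (conditions \eqref{*142*}--\eqref{*p3eq}); taking for instance $J_{N_{p+1}}\geq (p+1)\ttt^2 J_{N_p}$, your product $C|B|^d2^{2dJ_{N_{p+1}}/(p+1)}$ exceeds $1$, so not even the slack $|B|^{d-d/\ttt}$ saves the estimate. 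The paper avoids this by never invoking \eqref{*I*} at the fine generation: it uses the exact subdivision relation $m_\ttt(I)=m_\ttt(I')/\Delta^{I'}_{p+1}$ together with the lower bound \eqref{*28.5} on $\Delta^{I'}_{p+1}$ (this is \eqref{eqfinalm}), so that the $2^{dJ_{N_{p+1}}}$ coming from the count cancels exactly and every surviving factor ($2^{d\ttt J_{N_p}}$ and $m_\ttt(I')$) lives at the coarse scale $J_{N_p}$, which \emph{is} controlled by $|\log|B||$ through $|B|\leq 2^{-\ttt J_{N_p}}$. You need this replacement for your second and third steps to close; with it, your argument becomes the paper's.
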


%
 
\begin{proof} 
Fix $\ep_1= 2^{-1}$, 
a Borel set
$B\subset [0,1]$ with $ |B|<2^{-J_{N_{1}}}=\eta_{0}.$ Let $p\geq 2$ be the unique integer such that \begin{equation}\label{*pot1*}
 2^{-J_{N_p}}\leq |B| < 2^{-J_{N_{p-1}}}.
\end{equation}

Let us distinguish two cases:

\mk

$\bullet$ {\bf  $\mathbf {  2^{-\ttt J_{N_{p-1}}} \leq |B| <  2^{- J_{N_{p-1}}}:}$} By \eqref{*pot1*},   $B$   intersects at  most $2^d$ cubes  $I'\in \mathcal{F}_{p-1}$. 
If there is no such cube then
$m_\ttt(B)=0$. Otherwise,
 denoting by $I'$ one of these cubes,
using  \eqref{*I*} and  \eqref{*19**} we find that
\begin{eqnarray*}
m_\ttt(B) &  \leq  & 2^d \cdot  m_\ttt (I')   \,  \leq \,  2^{d}\cdot  
2^{- d J_{N_{p-1}}(1-\frac{2}{p-1})}\\
&\leq &  C \cdot |B|^{  \frac{d}{\ttt}( 1-\frac{2}{p-1})}<|B|^{\frac{d}{\ttt}-\ep_1}.
\end{eqnarray*}
when  $p$ is sufficiently large.  Recall that $p$ is related to the diameter of $B$: 
the smaller $|B|$ is,
the larger $p$ becomes.

\mk

$\bullet$  {\bf  $\mathbf {2^{-J_{N_p}}<|B|< 2^{-\ttt J_{N_{p-1}}}}:$} We will 
determine a sufficiently small $\eta_{1}\in (0,\eta_{0})$ later and will 
suppose that $|B|<\eta_{1}$. For small $|B|$'s, that is,
for large $p$'s, $B$ intersects at most one cube  $I'\in \mathcal{F}_{p-1}$. If there is no such cube then
$m_\ttt(B)=0$. Hence we need to deal with the case when such a cube $I'$ exists. Obviously,  $|B|<|I'|
=  2^{-\ttt J_{N_{p-1}}}$.
The mass $m_\ttt(I')$
is distributed evenly on the cubes
$\oB( (\bfk  + {\bf e})2^{-J_{N_{p}}},2^{-\ttt J_{N_{p}}-1})\sse I'.$
By \eqref{*142*}, $J_{N_{p}}>100\cdot \ttt J_{N_{p-1}}$.
On one hand,  we saw 
in \eqref{*28.5}
that $\DDD_{p}^{I'}>\frac{1}{2}  2^{-d\ttt J_{N_{p-1}}}2^{dJ_{N_{p}}}$. 
We deduce that the mass of a ball $I$ in $\mathcal{F}_{p}$ included in $I'$ has  $m_\theta$-mass which satisfies
\begin{equation}
\label{eqfinalm}
 m_\theta(I) =  m_\theta(I') \frac{1}{\DDD_{p}^{I'}} \leq  m_\theta(I')   2^{1+d\ttt J_{N_{p-1}} -dJ_{N_{p}}}.
\end{equation}
On the other hand,  since $B$ is within a cube of side length $2|B|$
the number of 
cubes of generation $p$ (i.e. of the form $\oB( (\bfk + {\bf e})2^{-J_{N_{p}}},2^{-\ttt J_{N_{p}}-1})$)  intersecting $B$ is less than $4^{d}|B|^{d}2^{dJ_{N_{p}}}.$

Hence, combining \eqref{*I*}, \eqref{eqfinalm} and $|B|^{-1/\ttt}>2^{J_{N_{p-1}}}$, we find that
\begin{eqnarray*}
m_\ttt(B) & \leq  &  \sum_{I\in \mathcal{F}_p: I \cap B \neq \emptyset}  \ \ \ m_\ttt (I) \\
 & \leq   & 4^{d}|B|^{d}2^{dJ_{N_{p}}}  m_\theta(I') 
 2^{1+d\ttt J_{N_{p-1}} -dJ_{N_{p}}}  \leq  C  |B|^{d}   2^{ d\ttt J_{N_{p-1}}}   m_\theta(I')    \\
  &  \leq  &   C  |B|^{d}  2^{ d\ttt J_{N_{p-1}}}  2^{-dJ_{N_{p-1}}(1-\frac{2}{p-1})} \leq C  |B|^{d}   \cdot 2^{d  J_{N_{p-1}}(\ttt-1+\frac{2}{p-1})}   \\
  &\leq &   
 C |B|^{d} \cdot |B|^{-\frac{1}{\ttt}d(\ttt-1+\frac{2}{p-1})}  \leq     |B|^{\frac{d}{\ttt}(1-\frac{2}{p-1})} \leq  |B|^{\frac{d}{\ttt}-\ep_1},
\end{eqnarray*}
the last inequality being true  for large $p$, i.e. for Borel sets $B$ of diameter small enough (by the same argument as above). 

 We can thus choose  $\eta_{1}\in (0,\eta_0)$ so that  
 \begin{eqnarray*}
 \mbox{when $|B|\leq \eta_{1}$,  } \ \  m_\ttt(B)    \leq      |B|^{\frac{d}{\ttt}-\ep_1}.
\end{eqnarray*}

\sk

Fix now $\ep_2=2^{-2}$. By the same method as above, we find $0<\eta_2<\eta_{1}$ 
such that if $|B|\leq \eta_2$, 
$$m_\ttt(B)\leq  |B|^{\frac{d}{\ttt}-\ep_2}.$$

\sk

We iterate the procedure:  $\forall \ p> 1$, there is 
$0< \eta_{p}<\eta_{p-1}$ such that 
$$\mbox{if  $|B|\leq \eta_p$,   } \ \ m_\ttt(B)\leq  |B|^{\frac{d}{\ttt}-\ep_p}, \ \ \ \  \mbox{ where $\ep_p=2^{-p}$.}$$

\sk

In order to conclude, we consider a map $\psi$  built as an increasing continuous interpolation function which goes through the points $(\eta_{p+1},\ep_p)_{p\geq 1}$
and $(\eta_{1},\ep _{1})$. The shift in the indices in the sequence is introduced so that 
$\ep _{p}\leq \psi(x)\leq \ep _{p-1}$ holds for $x\in [\eta_{p+1},\eta_{p}]$. Hence \eqref{ident6} holds true 
for every Borel set  $B$ satisfying $|B| \leq \eta:=\eta_1$. 
\end{proof}


\mk

We can now conclude our results on the values of the spectrum of $\mu$.

\begin{proposition}
\label{propfinal}
For any $h\in [0,d)$, $d_\mu(h)=h$.
\end{proposition}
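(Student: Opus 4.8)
The plan is to establish the lower bound $d_\mu(h) \geq h$, since Proposition \ref{prop0} already supplies the matching upper bound $d_\mu(h) \leq h$. The boundary case $h=0$ is immediate: $\dim_{\mathcal{H}} E_\mu(0) \leq 0$ by Proposition \ref{prop0}, while Hausdorff dimension is nonnegative, so $d_\mu(0)=0$. For $h \in (0,d)$ I would set $\ttt = d/h$, which is $>1$ precisely because $h<d$, and take the measure $m_\ttt$ produced by Theorem \ref{th2}. This measure is supported on $\mathcal{A}_\ttt$ and satisfies $\dim_{\mathcal{H}} m_\ttt \geq d/\ttt = h$; the whole argument will consist in showing that $m_\ttt$ is actually carried by $E_\mu(h)$.

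The core is a two-sided control of the exponent $h_\mu$ at $m_\ttt$-typical points. On one hand, Proposition \ref{prop1} gives $h_\mu(x)\leq d/\ttt = h$ for every $x\in\mathcal{A}_\ttt$, so $m_\ttt$-almost every point lies in $\widetilde{E_\mu}(h)$. On the other hand, I would discard the points that are \emph{strictly} more regular than $h$. Writing $\{x:h_\mu(x)<h\} = \bigcup_{n\geq 1}\widetilde{E_\mu}(h-1/n)$, each set $\widetilde{E_\mu}(h-1/n)$ has Hausdorff dimension at most $h-1/n<h$ by Proposition \ref{prop0}. Since $\dim_{\mathcal{H}} m_\ttt \geq h$, property \eqref{property} forces $m_\ttt(\widetilde{E_\mu}(h-1/n))=0$ for every $n$, and by countable additivity $m_\ttt(\{x:h_\mu(x)<h\})=0$. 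Combining the two, $m_\ttt$-almost every $x$ satisfies $h_\mu(x)=h$, that is, $m_\ttt(E_\mu(h))=1$.

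To conclude, since $m_\ttt$ charges $E_\mu(h)$ with full mass and $\dim_{\mathcal{H}} m_\ttt \geq h$, the mass distribution principle in the form of Proposition 10.2 of \cite{FALC} (used already through \eqref{property}) yields $\dim_{\mathcal{H}} E_\mu(h) \geq \dim_{\mathcal{H}} m_\ttt \geq h$, which together with the upper bound gives $d_\mu(h)=h$. The delicate point of the whole scheme is this middle step, and hence the quality of Theorem \ref{th2}: everything hinges on the \emph{sharp} lower bound $\dim_{\mathcal{H}} m_\ttt \geq h$, because it is exactly this that allows \eqref{property} to annihilate the over-regular set $\{h_\mu<h\}$ and thereby upgrade the one-sided estimate $h_\mu\leq h$ on $\mathcal{A}_\ttt$ into the exact identity $h_\mu=h$ holding $m_\ttt$-almost everywhere. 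The remaining steps are routine manipulations with the estimates already in hand.
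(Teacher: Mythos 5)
Your argument for $h\in(0,d)$ is correct and is essentially the proof given in the paper: the same choice $\ttt=d/h$, the same decomposition of the over-regular points as $\bigcup_{n\geq 1}\widetilde{E_\mu}(h-1/n)$, the same use of Proposition \ref{prop0} together with property \eqref{property} and the bound $\dim_{\mathcal{H}} m_\ttt\geq d/\ttt$ from Theorem \ref{th2} to show that $m_\ttt$ gives positive (indeed full) mass to $E_\mu(h)$, and the same appeal to Proposition 10.2 of \cite{FALC} to conclude that $\dim_{\mathcal{H}} E_\mu(h)\geq d/\ttt=h$.

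The case $h=0$, however, is not handled correctly. You argue that $\dim_{\mathcal{H}} E_\mu(0)\leq 0$ together with nonnegativity of dimension forces $d_\mu(0)=0$, but this only settles the value of the dimension \emph{provided $E_\mu(0)$ is non-empty}; nothing in your argument rules out $E_\mu(0)=\emptyset$. The theorem distinguishes between exponents that are attained ($h\in\zd$) and those that are not ($h>d$, where the statement records $E_\mu(h)=\emptyset$ rather than assigning it a dimension), so showing that the exponent $0$ is actually attained is part of the claim and is not automatic: a priori a typical measure could have no point of local dimension exactly $0$. The paper closes this by introducing $\mathcal{A}_\infty=\bigcap_{N\geq 1}\mathcal{A}_N$, observing that each $\mathcal{A}_N$ contains the dense $G_\de$ set $\bigcap_{P\geq 1}\bigcup_{p\geq P}\intt(\mathcal{A}_{N,p})$, so that $\mathcal{A}_\infty$ is non-empty (indeed uncountable) as a countable intersection of dense $G_\de$ sets, and then using Proposition \ref{prop1} to see that every $x\in\mathcal{A}_\infty$ satisfies $h_\mu(x)\leq d/N$ for all $N$, hence $h_\mu(x)=0$. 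This gives $\emptyset\neq\mathcal{A}_\infty\subset E_\mu(0)$ and completes the case $h=0$. You should add this, or an equivalent non-emptiness argument, to your proof.
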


\begin{proof}

Let $h\in (0,d)$, and let $\ttt =d/h >1$. Recall that
$$\widetilde {E_\mu}(h) = \{  x\in\zud: h_\mu(x)\leq h\} = \bigcup_{h'\leq h} \ \ E_\mu(h').$$

By  Proposition \ref{prop1}, $\mathcal{A}_\ttt \subset\widetilde {E_\mu}(h)$, hence by Proposition \ref{prop0} we have $\dim \mathcal{A}_\theta \leq \dim \widetilde {E_\mu}(h) \leq h$.

\sk

 Let us write
$$\mathcal{A}_\ttt =  \Big(\mathcal{A}_\ttt\cap E_\mu(h) \Big) \ \bigcup   \ \Big( \bigcup_{n\geq 1} \mathcal{A}_\ttt \cap \widetilde {E_\mu}(h-1/n) \Big).$$

\sk

Now, consider the measure $m_\ttt$ provided by Theorem \ref{th2}.  Since the Cantor set  $\mathcal{C}_\ttt$ is  the support of $m_\ttt $ and since it is included in $\mathcal{A}_\theta$, we have $m_\ttt (\mathcal{A}_\ttt) \geq m_\ttt (\calc_\ttt)>0$.

For any $n\geq 1$, $\dim_{\mathcal{H}} \, (\mathcal{A}_\ttt \cap \widetilde {E_\mu}(h -1/n)) \leq \dim_{\mathcal{H}} \widetilde {E_\mu} (h-1/n) \leq h-1/n <h=d/\theta$ again by Proposition \ref{prop0}.  Since we proved 
in Theorem \ref{th2}
that 
$\dim_{\mathcal{H}} m_\ttt \geq d/\ttt$, by Property \eqref{property}
we deduce that $m_\ttt (\mathcal{A}_\ttt \cap \widetilde {E_\mu}(h-1/n)) =0$.

Combining the above, we see that $m_\ttt(\mathcal{A}_\ttt) = m_\ttt \Big(\mathcal{A}_\ttt\cap E_\mu(h) \Big) >0$, hence $\dim_{\mathcal{H}}   E_\mu(h)  \geq \dim_{\mathcal{H}} \mathcal{A}_\ttt\cap E_\mu(h)  \geq d/\ttt =h $. We already had   the corresponding  upper bound, 
hence the result for $h\in (0,d)$.

\mk

It remains us to treat the case  $h=0$.

\mk

For $h=0$, we consider the set $\mathcal{A}_\infty = \bigcap_{\ttt>1} \mathcal{A}_\ttt  = \bigcap_{N\geq 1} \mathcal{A}_{N}$.  This set is non-empty and uncountable, 
since each $\caa_{N} $
contains a dense $G_\de$ subset of $\zu^d$,
namely 
$\cap_{P\geq 1}\cup _{p\geq P}\intt(\caa_{N,p})$
and the countable intersection of dense $G_{\ddd}$
sets is still dense $G_{\ddd}$.
 Moreover, by Proposition \ref {prop1}, every $x\in \mathcal{A}_\infty$ has exponent 0 for $\mu$. Hence, $\mathcal{A}_\infty \subset E_\mu(0)$ and $\dim_{\mathcal{H}} E_\mu(0) =0$. 
\end{proof}


\end{document}